
\documentclass[12pt]{article}

\newcommand{\blind}{0}

\usepackage{amsmath}
\usepackage{graphicx}
\usepackage{enumerate}
\usepackage{natbib}
\usepackage{url} 

\usepackage{mathabx}
\usepackage{bbm}
\renewcommand*{\P}{\mathbbm{P}}
\newcommand*{\E}{\mathbbm{E}}
\newcommand*{\Var}{\mathrm{Var}}

\renewcommand*{\vec}[1]{\boldsymbol{#1}}
\newcommand*{\mat}[1]{\mathrm{#1}}
\newcommand*{\set}[1]{\mathrm{#1}}

\newcommand*{\Ind}{\mathbbm{1}}

\newcommand*{\giv}{\;|\;}
\newcommand*{\mgiv}{\;\middle|\;}
\newcommand{\ind}{\perp\!\!\!\perp} 

\usepackage{amsthm}
\theoremstyle{plain}
\newtheorem{lem}{Lemma}
\newtheorem{thm}{Theorem}

\theoremstyle{definition}
\newtheorem{asm}{Assumption}
\newtheorem{tcnd}[asm]{Simplifying Assumption}

\newtheorem{exm}{Example}

\usepackage{mathtools}
\usepackage{color}
\usepackage{booktabs}

\addtolength{\oddsidemargin}{-.5in}%
\addtolength{\evensidemargin}{-1in}%
\addtolength{\textwidth}{1in}%
\addtolength{\textheight}{1.7in}%
\addtolength{\topmargin}{-1in}%

\begin{document}

\def\spacingset#1{\renewcommand{\baselinestretch}%
{#1}\small\normalsize} \spacingset{1}


\if0\blind
{
  \title{\bf Overcoming Repeated Testing Schedule Bias in Estimates of Disease Prevalence}
  \author{
    Patrick M. Schnell \\
    Division of Biostatistics, College of Public Health, The Ohio State University \\
    and \\
    Matthew Wascher \\
    Department of Mathematics, University of Dayton \\
    and \\
    Grzegorz A. Rempala \\
    Division of Biostatistics, College of Public Health, The Ohio State University
  }
  \maketitle
} \fi

\if1\blind
{
  \bigskip
  \bigskip
  \bigskip
  \begin{center}
    {\LARGE\bf Overcoming Repeated Testing Schedule Bias in Estimates of Disease Prevalence}
\end{center}
  \medskip
} \fi

\bigskip
\begin{abstract} 
  During the COVID-19 pandemic, many institutions such as universities and workplaces implemented testing regimens with every member of some population tested longitudinally, and those testing positive isolated for some time.
  Although the primary purpose of such regimens was to suppress disease spread by identifying and isolating infectious individuals, testing results were often also used to obtain prevalence and incidence estimates.
  Such estimates are helpful in risk assessment and institutional planning and various estimation procedures have been implemented, ranging from simple test-positive rates to complex dynamical modeling.
  Unfortunately, the popular test-positive rate is a biased estimator of prevalence under many seemingly innocuous longitudinal testing regimens with isolation.
  We illustrate how such bias arises and identify conditions under which the test-positive rate is unbiased.
  Further, we identify weaker conditions under which prevalence is identifiable and propose a new estimator of prevalence under longitudinal testing.
  We evaluate the proposed estimation procedure via simulation study and illustrate its use on a dataset derived by anonymizing testing data from \if1\blind{[BLINDED: Large United States University]}\else{The Ohio State University}\fi.
\end{abstract}

\noindent%
{\it Keywords:} COVID-19, inverse probability weighting, longitudinal census
\vfill

\newpage
\spacingset{1.9} 
\section{Introduction}

In the wake of the first signs of a global COVID-19 pandemic in early 2020, various tests for detecting SARS-CoV-2 were developed across the world within days of the release of the virus genome \citep{mina2021covid, corman2022detection}, with some countries who early on invested in large-scale testing capacity being able to control the SARS-CoV-2 transmission \citep{baker2022successful}.
The various testing strategies were also often essential parts of the gradual lifting of lockdowns and the relaxing of mask-wearing rules \citep{school21, reopening20}.
Accordingly, during different periods of the COVID-19 pandemic, many institutions implemented comprehensive testing regimens in which every member was tested longitudinally.
Notable examples include several universities \citep{school21, college21, chang2021repeat}, workplaces \citep{work22}, and sports leagues \citep{mba21}.
The primary purpose of these regimens was to suppress disease spread within the population by identifying and isolating infectious individuals and potentially quarantining their close contacts.
Thus the regimens were designed to attempt to detect as early as possible all infectious individuals during their infectious period, for example by requiring each individual to be tested at least or exactly once during each calendar week, or requiring that an individual go no more than a set small number of days between tests \citep{frazier2022modeling, chang2021repeat}.

A secondary goal of comprehensive longitudinal testing regimens was to provide frequent estimates of prevalence and incidence within the population.
Such estimates may be helpful in risk assessment (e.g., how safe is it to hold gatherings?) and other institutional planning (e.g., isolation and quarantine capacity).
Various methods of estimating prevalence have been applied.
One particularly simple and popular method is based on the test-positive rate (TPR): the number of positive tests divided by the total number of tests administered.
The TPR on a given day has been interpreted as an estimate of prevalence on that day \citep{kahanec2021impact}.
The intuition behind the use of the TPR as an estimator of prevalence arises from sampling arguments: if a sample representative of the population is tested, then the proportion of infectious individuals within that sample (the TPR under perfect test sensitivity and specificity) is an unbiased estimator of the proportion of infectious individuals within the population (the prevalence).
\cite{nicholson2022improving} provide a framework for debiasing local prevalence estimates from a non-representative surveillance population by incorporating information from broader representative samples.
An alternative and more complicated approach is via direct modeling of the process of disease spread, as by \cite{quick2021regression}, which additionally combines confirmed case reporting with seroprevalence data to handle under-ascertainment and unreliable reporting.

Unfortunately, the test-positive rate described above is a biased estimator of prevalence under large classes of natural longitudinal testing regimens when those testing positive are subsequently isolated.
This bias generally arises due to associations between the probability of testing on a given day and the time since the last test.
For instance, if individuals not in isolation or quarantine are tested exactly once per calendar week, the individuals eligible to be tested on a given day late in the week (because they have not yet been tested that week) are more likely to be infectious than those ineligible for testing (because they have already been tested that week and are known to have not been infectious at the time), all else equal.
\cite{hay2021intuition} argue that the TPR in a repeatedly tested population falls between incidence and prevalence in the long run, but do not consider bias due to within-week scheduling.
Estimation methods that involve modeling the process of disease spread may avoid such biases, but generally involve mechanistic assumptions and may require specific expertise and extensive computational resources to implement.

In this work we give detailed illustrations of how the bias of the TPR as an estimator of prevalence arises and describe a necessary and sufficient condition under which the TPR is unbiased under some simplifying assumptions (Section~\ref{sec:bias}).
Further, we identify a set of weaker conditions under which the TPR may be biased but prevalence may be estimated without bias via a Horvitz-Thompson--type estimator (\cite{horvitz1952generalization}, Section~\ref{sec:identifiability}), including under known imperfect test sensitivity and specificity.
We evaluate the Horvitz-Thompson (HT) estimator via simulation study (Section~\ref{sec:simulation}) and illustrate its application to a dataset derived by anonymizing comprehensive longitudinal testing data of a student population at \if1\blind{[BLINDED: Large United States University]}\else{The Ohio State University}\fi\ (Section~\ref{sec:example}).
We conclude with a discussion of strengths, weaknesses, and potential for future work (Section~\ref{sec:discussion}).


\section{Disease and testing process}
\label{sec:processes}

We employ two parallel formulations of the joint disease and testing process: one set-theoretic and one time-to-event.
The set-theoretic formulation is generally parsimonious when defining and manipulating estimators based on sampling at a specific time with minimal intrusion by time-evolution considerations, while the time-to-event formulation is convenient for describing phenomena arising from time-evolution, e.g., associations between probability of infectiousness and probability of testing.
The two formulations are mathematically equivalent, and both allow us to consider very general circumstances without positing a specific mechanistic model.

\subsection{Set-theoretic formulation}

Consider a discrete-time compartmental model in which at time $t > 0$ each of $N$ individuals are in one of three compartments: sets denoted by upright symbols $\set{W}(t)$ (Well), $\set{I}(t)$ (Infectious, the relative size of which we wish to estimate), or $\set{R}(t)$ (Removed).
Although we use language similar to that of compartmental models for infectious diseases, the considerations here apply to non-infectious conditions as well.
Note also differences from common compartmental models for infectious diseases, e.g., the SIRS model: individuals in $\set{W}(t)$ are not necessarily susceptible (they may be immune for a time following a previous recovery), and individuals in $\set{R}$ are not necessarily dead or recovered (they are simply removed from the surveilled population in some way).
For individuals indexed by $i$, let vectors $\vec{W}(t)$, $\vec{I}(t)$, and $\vec{R}(t)$ be the vectors indicating membership in $\set{W}(t)$, $\set{I}(t)$, and $\set{R}(t)$, respectively, so that, e.g., $W_i(t) = 1$ if individual $i$ is in $\set{W}(t)$ and $0$ otherwise.
The individual subscript $i$ may be dropped where the result is unambiguous, and we will denote sums over all $i$ by a $+$ subscript, e.g., $W_+(t) = \sum_{i=1}^{N} W_i(t)$.
Other sets will be denoted similarly.

Suppose that the $N$ total individuals in the population participate in a disease surveillance scheme in which each member is tested repeatedly for the condition.
We do not posit a particular mechanism for transitions from $\set{W}$ to $\set{I}$ or the reverse (natural recoveries go from $\set{I}$ to $\set{W}$), and we assume transitions from $\set{W}$ or $\set{I}$ to $\set{R}$ occur immediately after a positive result is obtained, and only then (no deaths without testing, and self-isolating individuals remain in $\set{I}$ or $\set{W}$).
At some point after entry into $\set{R}$, an individual may be cleared to re-enter $\set{W}$, and this clearance is fully observed.
Thus we have the discrete process:
\begin{enumerate}
\item
  The compartments $(\set{W}(t), \set{I}(t), \set{R}(t))$ represent the system state at the start of day $t$.
\item \label{it:tested-subset}
  A subset $\set{D}(t)$ of the \emph{non-removed} population $\set{W}(t) \cup \set{I}(t)$ is tested on day $t$, with some subset of tests returning positive results, $\set{Y}(t) \subseteq \set{D}(t)$.
  Under perfect sensitivity and specificity, $\set{Y}(t) = \set{D}(t) \cap \set{I}(t)$.
  These individuals will be moved to $\set{R}$ the following day, regardless of other events.
\item
  Some subset of well individuals $\set{Q}(t) \subseteq \set{W}(t)$ receive infecting exposure during the day, but will not be infectious (or detectable via testing) until the following day.
\item
  Some subset of infectious individuals $\set{U}(t) \subseteq \set{I}(t)$ will recover undetected.
\item 
  Some subset of removed individuals $\set{S}(t) \subseteq \set{R}(t)$ are cleared to return to the non-removed population the following day.
\item
  Compartments are updated as
  \begin{equation}
    \begin{aligned}
      \set{W}(t+1) &= \left[ \set{W}(t) \cup \set{U}(t) \cup \set{S}(t) \right] \setminus \left[ \set{Q}(t) \cup \set{Y}(t) \right], \\
      \set{I}(t+1) &= \left[ \set{I}(t) \cup \set{Q}(t) \right] \setminus \left[ \set{U}(t) \cup \set{Y}(t) \right], \\
      \set{R}(t+1) &= \left[\set{R}(t) \cup \set{Y}(t) \right] \setminus \set{S}(t).
    \end{aligned}
  \end{equation}
\end{enumerate}
Our central point, illustrated in Section~\ref{sec:bias}, is that, depending on the specific subset of the population tested in step \eqref{it:tested-subset} above, the TPR for a day $Y_+(t) / D_+(t)$ may not be an unbiased estimator of the prevalence of infection in either the entire population ($I_+(t) / N$) or the non-removed population ($I_+(t) / \left[ N - R_+(t) \right]$), even with perfect test sensitivity and specificity, under some seemingly innocuous mechanisms for selecting $\set{D}(t)$.

We do not allow for the ``exposed'' compartment used in many infectious disease models, representing some delay between exposure and either infectiousness or detectability.
In the former case, the distinction is not relevant for our approach to prevalence estimation as long as we are only concerned with estimating the prevalence of infectiousness: we may simply redefine $\set{Q}(t)$ as not those who are exposed, but those who are about to become infectious.
Delayed detectability is a matter of time-varying sensitivity, discussed in Section~\ref{sec:simulation-violate}.

\subsection{Time-to-event formulation}

The set-theoretic formulation of $(\set{W}(t), \set{I}(t), \set{R}(t))$ may be helpfully re-expressed as a deterministic function of a time-to-event process for infectious exposure and the testing process.
Let $C_{il}$ be the $l$th time individual $i$ is cleared to enter the monitored population at the next timepoint (in $\set{R}(C_{il})$ then $\set{W}(C_{il}+1)$), with the convention that $C_{i1} = 0$.
While in the monitored population, individuals may be exposed for the $m$th time at $X_{im}$ (in $\set{W}(X_{im})$ then $\set{I}(X_{im}+1)$) and subsequently exit the infectious compartment via recovery or isolation at $V_{im}$ (in $\set{I}(V_{im})$ then $\set{W}(V_{im}+1)$ or $\set{R}(V_{im}+1)$), meanwhile being tested for the $k$th time at $Z_{ik}$ for zero or more values of $k$.
Alternatively, a false positive may send an individual directly from $\set{W}(Z_{ik})$ to $\set{R}(Z_{ik}+1)$.
The indices $k$ and $m$ are cumulative and do not reset after each clearance time.
We use the convention that $Z_{i1} = V_{i1} = 0$ and $X_{i1} = -\infty$.
We write $\tilde{X}_{il}$ to mean the time of first infectious exposure in the interval between $C_{il}$ and subsequent removal from the monitored population, with the convention that $\tilde{X}_{il} = \infty$ if no such exposure occurs (i.e., due to a false positive test result), and similarly for $\tilde{V}_{il}$.
We write $L_i(t) = \max\{l : C_{il} < t\}$, $K_i(t) = \max\{k : Z_{ik} < t\}$, and $M_i(t) = \max\{m : X_{im} < t\}$ so that, e.g., $Z_{i, K_i(t)+1}$ is the time of the earliest test at or after time $t$.
Coupled with the test result indicator $Y_i(t)$, the dynamical process is then
\begin{align}
  R_i(t) &= 1
           \iff
           Y(Z_{iK_i(t)}) = 1
           \wedge
           C_{iL_i(t)} < Z_{iK_i(t)},
  \\
  W_i(t) &= 1
           \iff
           R_i(t) = 0
           \wedge
           V_{iM_i(t)} < t,
  \\
  I_i(t) &= 1
           \iff
           R_i(t) = 0
           \wedge
           V_{iM_i(t)} \geq t,
\end{align}
and $D_i(t) = 1$ if and only if $Z_{ik} = t$ for some $k$.

\section{Bias of the test-positive rate}
\label{sec:bias}

The TPR is biased as an estimator of prevalence under some natural and otherwise attractive longitudinal testing regimens.
The magnitude of this bias does not necessarily decrease with increased sample size when the population size increases proportionally.
We consider a specific form of bias that arises solely due to the longitudinal structure of the testing regimen coupled with isolation of those who test positive, even under extremely restrictive assumptions including perfect test sensitivity and specificity and independent and identically distributed testing and exposure processes between individuals.
We make the following one assumption that will be carried through the remainder of the paper, and the following two simplifying assumptions imposed only to illustrate the mechanism of bias of the test-positive rate, and which will be relaxed in later sections when considering our proposed unbiased estimator.

\begin{asm}[No undetected recoveries]
  \label{asm:no-undetected-recoveries}
  Individuals in $\set{I}$ cannot return to $\set{W}$ except by passing through $\set{R}$.
\end{asm}

Assumption~\ref{asm:no-undetected-recoveries} reflects the original goal of the surveillance scheme: to quickly detect infectious individuals and isolate them from the rest of the population.
This goal may be reasonably met (or approximately so) if the test sensitivity is high and the time between tests is sufficiently short relative to the infectious period.
The practical consequences of violating this assumption are evaluated in Section~\ref{sec:simulation-violate}.

\begin{tcnd}[Perfect test sensitivity and specificity]
  \label{tcnd:perfect-test}
  \begin{align}
    \P[Y_i(t) = 1 \giv D_i(t) = 1, W_i(t) = 0, I_i(t) = 1, R_i(t) = 0] &= 1, \\
    \P[Y_i(t) = 1 \giv D_i(t) = 1, W_i(t) = 1, I_i(t) = 0, R_i(t) = 0] &= 0.
  \end{align}
\end{tcnd}

\begin{tcnd}[Independent and identically distributed joint processes between individuals]
  \label{tcnd:iid-between-individuals}
  For all $i \neq j$,
  \begin{equation}
    \begin{aligned}
      (\vec{X}_i, \vec{V}_i, \vec{Z}_i, \vec{C}_i) &\ind (\vec{X}_j, \vec{V}_j, \vec{Z}_j, \vec{C}_j), \\
      \P[\vec{x}_i, \vec{v}_i, \vec{z}_i, \vec{c}_i] &= \P[\vec{x}_j, \vec{v}_j, \vec{z}_j, \vec{c}_j].
    \end{aligned}
  \end{equation}
\end{tcnd}
Again, both Simplifying Assumptions~\ref{tcnd:perfect-test} and \ref{tcnd:iid-between-individuals} are applied only for the remainder of this section for illustration and will be relaxed in Section~\ref{sec:identifiability}.

\subsection{Condition for unbiasedness of the test-positive rate}

Under Simplifying Assumptions~\ref{tcnd:perfect-test} and \ref{tcnd:iid-between-individuals}, the necessary and sufficient condition for unbiasedness of the TPR as an estimator of prevalence in the non-removed population is that for all non-removed individuals at any time $t$, their infectiousness and whether they are tested at that time are independent:

\begin{asm}[Marginal independence of testing and infectiousness (MITI)]
  \label{asm:miti}
  For any time $t$, and for all $i$, $D_i(t) \ind I_i(t) \giv R_i(t) = 0$.
\end{asm}

\begin{lem}
  \label{lem:unbiased-tpr}
  Under perfect test sensitivity and specificity, and independent and identically distributed joint exposure-testing processes between individuals,
  \begin{equation}
    \label{eq:marginal-unbiasedness}
    \E\left[\frac{Y_+(t)}{D_+(t)} \mgiv D_+(t) > 0\right] = \P\left[I(t) = 1 \mgiv R(t) = 0 \right]
  \end{equation}
  if and only if, for all $i$, $D_i(t) \ind I_i(t) \giv R_i(t)$ (MITI).
\end{lem}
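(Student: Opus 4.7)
The plan is to reduce the expected TPR, conditional on at least one test being administered, to the single-individual quantity $\P[I_1(t) = 1 \giv D_1(t) = 1]$, and then to argue that this coincides with $\P[I_1(t) = 1 \giv R_1(t) = 0]$ if and only if MITI holds.

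First, under Simplifying Assumption~\ref{tcnd:perfect-test} we have $Y_i(t) = D_i(t) I_i(t)$, so on $\{D_+(t) > 0\}$ we can write $Y_+(t)/D_+(t) = D_+(t)^{-1}\sum_i D_i(t)I_i(t)$. I would condition on $\{D_+(t)=d\}$ for arbitrary $d \geq 1$ and use exchangeability from Simplifying Assumption~\ref{tcnd:iid-between-individuals} to reduce the sum to
$$\E[Y_+(t)/d \giv D_+(t) = d] \;=\; \P[I_1(t) = 1 \giv D_1(t) = 1, D_+(t) = d].$$
Next, since $(D_1(t), I_1(t))$ is a measurable function of the private process $(\vec{X}_1, \vec{V}_1, \vec{Z}_1, \vec{C}_1)$ and is therefore independent of $D_+(t) - D_1(t)$ by Simplifying Assumption~\ref{tcnd:iid-between-individuals}, the extra conditioning on the aggregate count beyond $D_1(t) = 1$ is uninformative. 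Averaging over $d \geq 1$ weighted by $\P[D_+(t)=d \giv D_+(t)>0]$ then gives the key identity $\E[Y_+(t)/D_+(t) \giv D_+(t)>0] = \P[I_1(t) = 1 \giv D_1(t) = 1]$.

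Step~\ref{it:tested-subset} of the discrete process restricts testing to the non-removed population, so $D_1(t)=1$ forces $R_1(t)=0$; hence $\P[I_1(t) = 1 \giv D_1(t) = 1] = \P[I_1(t) = 1 \giv D_1(t) = 1, R_1(t) = 0]$. The sufficiency direction is immediate: under MITI this equals $\P[I_1(t) = 1 \giv R_1(t) = 0]$, matching \eqref{eq:marginal-unbiasedness}. For necessity, suppose \eqref{eq:marginal-unbiasedness} holds and set $\alpha = \P[D_1(t) = 1 \giv R_1(t) = 0]$. The total-probability decomposition
$$\P[I_1 = 1 \giv R_1 = 0] \;=\; \alpha\, \P[I_1 = 1 \giv D_1 = 1, R_1 = 0] + (1-\alpha)\, \P[I_1 = 1 \giv D_1 = 0, R_1 = 0]$$
combined with the already-established equality at $D_1 = 1$ forces $\P[I_1 = 1 \giv D_1 = 0, R_1 = 0] = \P[I_1 = 1 \giv R_1 = 0]$ whenever $\alpha < 1$ (the case $\alpha = 1$ being vacuous for MITI). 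Together the $D_1 = 0$ and $D_1 = 1$ cases yield $D_1(t) \ind I_1(t) \giv R_1(t)=0$.

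The most delicate step is the exchangeability reduction of $\E[Y_+(t)/D_+(t) \giv D_+(t) = d]$ to a single-individual conditional probability: it requires carefully peeling the conditioning on the aggregate count $D_+(t)$ off the individual event $\{D_1(t) = 1\}$ using the between-individual independence of Simplifying Assumption~\ref{tcnd:iid-between-individuals}. Once that is in hand, the equivalence with MITI is essentially a bookkeeping exercise on conditional probabilities.
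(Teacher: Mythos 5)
Your proof is correct and follows essentially the same route as the paper's: reduce $\E[Y_+(t)/D_+(t)\mid D_+(t)>0]$ to the single-individual quantity $\P[I(t)=1\mid D(t)=1, R(t)=0]$ using perfect tests plus between-individual independence and exchangeability, then show this equals $\P[I(t)=1\mid R(t)=0]$ if and only if MITI holds. If anything, you supply details the paper's two-line proof glosses over, namely the careful peeling of the conditioning on $D_+(t)=d$ off the event $\{D_1(t)=1\}$ and the explicit total-probability argument (with the degenerate $\alpha=1$ case) for the necessity direction.
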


\begin{proof}
  With perfect tests and independent and identically distributed processes,
  \begin{equation}
    \begin{aligned}
      \E\left[\frac{Y_+(t)}{D_+(t)} \mgiv D_+(t) > 0\right]
      &= \E\left[\frac{1}{D_+(t)} \sum_i D_i(t) \P\left[I_i(t) = 1 \mgiv \vec{D}(t) \right]\mgiv D_+(t) > 0\right] \\
      &= \P\left[I(t) = 1 \mgiv D(t) = 1, R(t) = 0 \right]. \\
    \end{aligned}
  \end{equation}
  Finally, we have that $\P\left[I(t) = 1 \mgiv D(t) = 1, R(t) = 0 \right] = \P\left[I(t) = 1 \mgiv R(t) = 0 \right]$ if and only if $D_i(t) \perp I_i(t) \giv R_i(t) = 0$ for all $i$ (MITI).
\end{proof}

We refer to \eqref{eq:marginal-unbiasedness} as \emph{marginal unbiasedness} of the TPR as an estimator for prevalence in the non-removed population, in contrast to the notion of \emph{conditional unbiasedness} in which the expectation of the TPR conditional on a realization of $\vec{I}(t)$ is $I_+(t) / [N-R_+(t)]$.
The sufficiency of MITI for conditional unbiasedness of the TPR follows from the usual independent sampling arguments, and MITI is necessary for conditional unbiasedness under all realizations of $\vec{I}(t)$ simultaneously because the latter is sufficient for marginal unbiasedness.

\subsection{Examples in which the test-positive rate is biased}
\label{sec:examples}

Marginal independence of testing and infectiousness (MITI) is straightforward to state and its relationship to the unbiasedness of the TPR is intuitive.
Additionally, it is easy to imagine mechanisms by which MITI would be violated: for example, individuals experiencing symptoms of the surveilled illness may volunteer to be tested earlier than they otherwise would be.
However, it has apparently gone unrecognized that MITI may be violated solely by the longitudinal nature of a testing regimen with isolation in conjunction with a strictly positive hazard of exposure, without the need for any other confounding or mediating factors.
We give four examples of testing schemes: one satisfying MITI, two violating MITI due to longitudinal scheduling and isolation alone, and one violating MITI due to confounding by another factor.
For simplicity, we assume in all examples that individuals never return to the surveilled population after being removed.

\begin{exm}[Simple random testing]
  A surveillance program tests a simple random sample of the non-removed population at each timepoint.
  MITI trivially applies and by Lemma~\ref{lem:unbiased-tpr} the TPR is an unbiased estimator of prevalence in the non-removed population.
\end{exm}

\begin{exm}[Max-gap testing]
  A surveillance program requires that no individual spend an interval greater than some maximum length in the non-removed population without being tested.
  For example, individuals may go no more than six consecutive days in the non-removed population without being tested, but may be tested more often.

  Max-gap testing does not generally satisfy MITI.
  Let $\delta$ be the maximum consecutive days an individual may spend in the non-removed population without testing.
  Then
  \begin{equation}
    \label{eq:max-gap-test-prob}
    \begin{aligned}
      \P[D(t) = 1 \giv Z_{K(t)} < t - \delta, R(t) = 0] &= 1, \\
      \P[D(t) = 1 \giv Z_{K(t)} \geq t - \delta, R(t) = 0] &\leq 1,
    \end{aligned}
  \end{equation}
  and the inequality is strict if $\P[D(t) = 1 \giv R(t) = 0] < 1$.
  Note that $\P[I(t-\delta) = 1 \giv Z_{k(t)} \geq t-\delta, R(t) = 0] = 0$, but if all non-removed individuals have a strictly positive hazard of infectious exposure each day, $\P[I(t-\delta) = 1 \giv Z_{k(t)} < t - \delta, R(t) = 0] > 0$.
  Thus if further the hazard of exposure does not depend on the time of the last test,
  \begin{equation}
    \label{eq:max-gap-exposure-prob}
    \P[I(t) = 1 \giv Z_{K(t)} < t - \delta, R(t) = 0] > \P[I(t) = 1 \giv Z_{K(t)} \geq t-\delta, R(t) = 0].
  \end{equation}
  Together, \eqref{eq:max-gap-test-prob} and \eqref{eq:max-gap-exposure-prob} violate MITI.
  
  As a concrete example, consider a scheme in which an individual last testing negative on day $t-k$, $k = 1, \ldots, 7$, has probability $k/7$ of being tested next on day $t$.
  On any given day and in the presence of no additional confounding, an individual recently tested (and therefore known to be recently non-infectious) is less likely to be tested than an individual who has been tested longer ago (and who therefore has had more opportunity to become infectious since then).
  The TPR each day would be skewed toward the prevalence among those tested longer ago and thus biased upward.
  Other distributions of waiting times could change the magnitude or direction of the bias.
\end{exm}

\begin{exm}[Once-per-period testing]
  \label{exm:once-per-period}
  A surveillance program divides the calendar into intervals $(0, b_1], (b_1, b_2], \ldots$, and within each interval $(b_{k-1}, b_{k}]$ each non-removed individual is tested exactly once.
  Suppose the intervals are weeks beginning on Monday ($b_{k-1}+1$) and ending on Sunday ($b_k$).
  There may be marginal imbalances, such as overall preference among units to be tested on certain days of the week, and within-unit correlations, such as a preference to be tested on the same day of each week.

  Once-per-period testing does not generally satisfy MITI.
  Suppose $t \in (b_{k-1}, b_k]$ and $R(t) = 0$.
  Then
  \begin{equation}
    \label{eq:once-per-period-test-prob}
    \begin{aligned}
      \P\left[D(t) = 1 \mgiv Z_{K(t)} > b_{k-1}, R(t) = 0\right] &= 0, \\
      \P\left[D(t) = 1 \mgiv Z_{K(t)} \leq b_{k-1}, R(t) = 0\right] &\geq 0,
    \end{aligned}
  \end{equation}
  and the inequality is strict if $\P[D(t) = 1] > 0$.
  Note that $\P[I(b_{k-1}+1) = 1 \giv Z_{k(t)} > b_{k-1}, R(t) = 0] = 0$, but if all non-removed individuals have a strictly positive hazard of infectious exposure each day, $\P[I(b_{k-1}+1) = 1 \giv Z_{k(t)} \leq b_{k-1}, R(t) = 0] > 0$.
  Thus if further the hazard of exposure does not depend on the time of the last test,
  \begin{equation}
    \label{eq:once-per-period-exposure-prob}
    \P\left[I(t) = 1 \mgiv Z_{K(t)} \leq b_{k-1}, R(t) = 0\right]
    >  \P\left[I(t) = 1 \mgiv Z_{K(t)} > b_{k-1}, R(t) = 0\right].
  \end{equation}
  Together, \eqref{eq:once-per-period-test-prob} and \eqref{eq:once-per-period-exposure-prob} violate MITI.

  As a concrete example, consider the specific case in which the periods are calendar weeks and each day we test a simple random sample of the non-removed population that has not yet been tested during the week (necessarily a census on the final day).
  The TPR is an unbiased estimate of the prevalence among the population eligible to be tested on that day.
  However, in the absence of additional confounding, the prevalence among the eligible population is expected to be higher than among the ineligible population because those in the ineligible population have tested negative more recently.
  Thus the TPR would be expected to overestimate the prevalence in the combined population.
  Other methods of sampling from the eligible population could change the magnitude or direction of bias.
\end{exm}

\begin{exm}[Simple random testing plus contact tracing]
  A surveillance program tests a simple random sample of the non-removed population at each timepoint, and all close contacts of those with positive test results from the previous timepoint.
  It is assumed that those tested via contact tracing are more likely to be infectious than those tested through simple random sampling.
  Thus, the overall TPR is an overestimate of prevalence, but the TPR from tests from the simple random testing component is an unbiased estimate of prevalence.
  The discrepancy between the TPRs from the two samples may provide information on transmission of the disease.
  Note that this example also violates the simplifying assumption of independence between individuals.
\end{exm}

\subsection{Magnitude of bias of the test-positive rate}

To restrict our attention to bias arising solely due to the longitudinal nature of a testing regimen in conjunction with a strictly positive hazard of exposure, without the need for any other confounding or mediating factors, we consider joint exposure-testing processes satisfying the following conditional independence assumption.

\begin{asm}[Conditional independence of testing and exposure (CITE)]
  \label{asm:cite}
  For any time $t$, and for all individuals $i$, $Z_{i, K_i(t)+1} \ind \tilde{X}_{i, L_i(t)} \giv Z_{i, K_i(t)}, C_{i, L_i(t)}, R_i(t) = 0$.
\end{asm}

Due to the assumption of no undetected recoveries, $\tilde{X}_{i,L_i(t)}$ encompasses all exposures.
CITE will be used in later sections, but for the arguments in this section all that is needed is the following relaxation.

\begin{asm}[Conditional independence of testing and infectiousness (CITI)]
  \label{asm:citi}
  For any time $t$, and for all individuals $i$, $D_i(t) \ind I_i(t) \giv Z_{i,K_i(t)}, C_{i,L_i(t)}, R_i(t) = 0$.
\end{asm}

CITI is a modification of MITI to apply within strata defined by the most recent test and clearance times, both of which are observed.
Thus in principle the TPR is an unbiased estimate of prevalence within each $(Z_{K(t)}, C_{L(t)})$ stratum \emph{for which there is a positive probability of testing}.
Simple random testing trivially satisfies CITI, and the concrete examples of max-gap and once-per-period testing given in the previous section also satisfy CITI.
However, CITI may be violated by specific examples of max-gap and once-per-period testing, for instance if symptomatic individuals tend to be tested earlier when eligible.
CITI but not CITE may be satisfied if past tests influence future behavior, e.g., if the regimen is a simple random testing scheme paired with an exposure process in which individuals tested during business hours behave more riskily that evening.

The expected prevalence at time $t$ may be decomposed as
\begin{equation}
  \P[I(t) = 1 \giv R(t) = 0]
  = \sum_{z=1}^{t-1} \P[I(t) = 1 \giv Z_{K(t)} = z, R(t) = 0] \P[Z_{K(t)} = z \giv R(t) = 0].
\end{equation}
Similarly, under CITI, the expectation of the test-positive rate at time $t$, marginally over $\vec{I}(t)$, may be decomposed as
\begin{equation}
  \begin{aligned}
    \P&[I(t) = 1 \giv D(t) = 1, R(t) = 0] \\
    &= \sum_{z=1}^{t-1} \P[I(t) = 1 \giv D(t) = 1, Z_{K(t)} = z, R(t) = 0]
    \P[Z_{K(t)} = z \giv D(t) = 1, R(t) = 0], \\
    &= \sum_{z=1}^{t-1} \P[I(t) = 1 \giv Z_{K(t)} = z, R(t) = 0]
    \P[Z_{K(t)} = z \giv D(t) = 1, R(t) = 0].
  \end{aligned}
\end{equation}
Consider the ratio of the expectation of TPR to the expectation of prevalence
\begin{equation}
  B(t)
  = \frac{
    \sum_{z=1}^{t-1} \P[I(t) = 1 \giv Z_{K(t)} = z, R(t) = 0] \P[Z_{K(t)} = z \giv D(t) = 1, R(t) = 0]
  }{
    \sum_{z=1}^{t-1} \P[I(t) = 1 \giv Z_{K(t)} = z, R(t) = 0] \P[Z_{K(t)} = z \giv \phantom{D(t) = 1,\,} R(t) = 0]
  },
\end{equation}
which can be viewed as the ratio of weighted averages of prevalences in strata defined by time of last test.
Note that scaling prevelence does not affect $B(t)$ as long as prevalence is scaled uniformly across the strata defined by time of last test.

The bias can be quite large, even for non-pathological hazards and testing schemes.
Consider a testing scheme in which every individual is tested on an $\tau$-day rotation, i.e., every time an individual is tested, their next test is $\tau$ days later (as considered in \cite{chang2021repeat}), and the same number of individuals is tested each day.
Under such a scheme,
\begin{equation}
  B(t) = \frac{
    \P[I(t) = 1 \giv Z_{K(t)} = t - \tau, R(t) = 0]
  }{
    \frac{1}{\tau} \sum_{z=t-\tau}^{t-1} \P[I(t) = 1 \giv Z_{K(t)} = z, R(t) = 0]
  }.
\end{equation}
Suppose that the infection hazard is such that approximately the same proportion $p$ of non-removed individuals are infected each day independently of time since last test, i.e., $\P[I(t) = 1 \giv Z_{K(t)} = z, R(t) = 0] \approx (t-z) p$.
An approximately constant infection probability is reasonable when the hazard is constant and small.
Then
\begin{equation}
  \begin{aligned}
    B(t)
    \approx \frac{
      \tau [t - (t - \tau)] p
    }{
      \sum_{z=t-\tau}^{t-1} (t - z) p
    }
    \approx \frac{
      \tau^2 p
    }{
      \frac{1}{2} \tau^2 p
    }
    = 2.
  \end{aligned}
\end{equation}
That is, on average the TPR over-estimates the prevalence by approximately 100\%.

\section{Identifiability of prevalence}
\label{sec:identifiability}

In the previous section we illustrated how bias of the TPR as an estimator of prevalence can arise from the longitudinal nature of the testing regimen, even under strict assumptions of perfect test sensitivity and specificity, independence and identical distributions of processes between individuals, and conditional independence of testing and exposure (CITE).
In this section we relax the first two assumptions and illustrate a Horvitz-Thompson--type (HT) estimator that is unbiased if the testing regimen is correctly specified (in the form of known testing probabilities), and nearly unbiased when the testing regimen is nonparametrically estimated in the sense that bias only arises due to Jensen's inequality applied to estimated testing probabilities.
Table~\ref{tab:assumptions} summarizes the assumptions used.

\begin{table}
  \centering
  \begin{tabular}{rp{32em}}
    \toprule
    No. & Name and abbreviated description \\
    \midrule
    \ref{asm:no-undetected-recoveries} &
                                         \textbf{No undetected recoveries:}
                                         \newline
                                         Individuals in $\set{I}$ cannot return to $\set{W}$ except by passing through $\set{R}$. \\
    \ref{asm:cite} &
                     \textbf{Conditional independence of testing and exposure (CITE):}
                     \newline
                     $Z_{i, K_i(t)+1} \perp \tilde{X}_{i, L_i(t)} \giv Z_{i, K_i(t)}, C_{i, L_i(t)}, R_i(t) = 0$. \\
    \ref{asm:simple-sens-spec} &
                                 \textbf{Simple random sensitivity and specificity:}
                                 \newline
                                 $Y_i(t) = D_i(t) \{F_i(t)(1-W_i(t)) + (1-G_i(t))W_i(t) \}$
                                 \newline
                                 with $F_i(t) \overset{iid}{\sim} \mathrm{Bernoulli}[\eta]$ and $G_i(t) \overset{iid}{\sim} \mathrm{Bernoulli}[\nu]$. \\
    \ref{asm:identically-distributed-individuals} &
                                                    \textbf{Identically distributed individuals:}
                                                    \newline
                                                    $\P[\vec{x}_i, \vec{v}_i, \vec{z}_i, \vec{c}_i] = \P[\vec{x}_j, \vec{v}_j, \vec{z}_j, \vec{c}_j]$. \\
    \ref{asm:itos} &
                     \textbf{Independence of testing from others' states (ITOS):}
                     \newline
                     $\P\left[D_i(t) = 1 \mgiv \vec{W}(t), \vec{C}_{\vec{L}(t)}\right] = \P\left[D_i(t) = 1 \mgiv W_i(t), C_{i,L_i(t)}\right]$. \\
                     \ref{asm:positivity} &
                                            \textbf{Positivity:}
                                            \newline
                                            $\P[D_i(t) = 1 \giv W_i(t) = 1, C_{i,L_i(t)} = c] > 0$. \\
    \bottomrule
  \end{tabular}
  \caption{Sufficient set of assumptions for computing an unbiased Horvitz-Thompson--type estimator of prevalence within the framework of Section~\ref{sec:processes}.}
  \label{tab:assumptions}
\end{table}

We begin by relaxing Simplifying Assumption \ref{tcnd:perfect-test} (perfect test sensitivity and specificity) and Simplifying Assumption \ref{tcnd:iid-between-individuals} (i.i.d.\ individuals) to the following assumptions, respectively.

\begin{asm}[Simple random sensitivity and specificity]
  \label{asm:simple-sens-spec}
  Positive test results are indicated by $Y_i(t) = D_i(t) \{F_i(t)(1-W_i(t)) + (1-G_i(t))W_i(t) \}$ with $F_i(t)$, $G_i(t)$ Bernoulli random variables with success probabilities $\eta \in (0, 1]$ (test sensitivity) and $\nu \in (0, 1]$ (test specificity), respectively, and independent of all other variables.
\end{asm}
\begin{asm}[Identically distributed individuals]
  \label{asm:identically-distributed-individuals}
  For all $i,j$, $\P[\vec{x}_i, \vec{v}_i, \vec{z}_i, \vec{c}_i] = \P[\vec{x}_j, \vec{v}_j, \vec{z}_j, \vec{c}_j].$
\end{asm}

\begin{asm}[Independence of testing from others' states (ITOS)]
  \label{asm:itos}
  \begin{equation*}
    \P\left[D_i(t) = 1 \mgiv \vec{W}(t), \vec{C}_{\vec{L}(t)}\right] = \P\left[D_i(t) = 1 \mgiv W_i(t), C_{i,L_i(t)}\right].
  \end{equation*}
\end{asm}

ITOS allows a priori dependence in testing schemes (e.g., preferring to test or avoiding testing members of the same household on the same day), but not dependence induced by whether or not others are in $\set{W}(t)$.
For example, ITOS would not be expected to hold if the testing program included contact tracing because an individual being removed to $\set{R}$ after recently testing positive would increase the likelihood of their close contacts being tested versus what it would have been had they tested negative and remained in $\set{W}$.

Rather than estimating the prevalence $I_+(t) / [N-R_+(t)]$ directly, we will estimate $W_+(t) = \sum_i W_i(t)$ and use the known values of $N$ and $R_+(t)$ to transform our estimate into one of prevalence.
We provide an estimator that is unbiased for $W_+(t)$ conditional on a priori unobserved $\vec{W}(t)$ via a modification of the argument of \cite{horvitz1952generalization}, weighting transformed test results by $\omega_i(t) = 1 / \P[D_i(t) = 1 \giv W_i(t) = 1, C_{i,l_i(t)} = c]$.
We will estimate $W_+^{(c)}(t) = \sum_i W_i(t) \Ind(C_{i,L_i(t)} = c)$ separately within strata defined by the observed $C_{i,L_i(t)}$ and combine the estimates into one estimate for the overall population.
The following assumption of positivity conditional on last clearance and current membership in the Well compartment (but not on intervening test times) guarantees finite weights:
\begin{asm}[Positivity]
  \label{asm:positivity}
  For any $t$ and $c < t$, $\P[D_i(t) = 1 \giv W_i(t) = 1, C_{i,l_i(t)} = c] > 0$.
\end{asm}

Theorem~\ref{thm:unbiased-estimator} provides an unbiased estimator of $W_+^{(c)}(t)$ given known inverse testing probability weights, and the subsequent Theorem~\ref{thm:test-probs} provides an expression for the testing probabilities computable under a known testing regimen or estimable from testing data.

\begin{thm}[Unbiased estimator of prevalence]
  \label{thm:unbiased-estimator}
  Assume simple random sensitivity $\eta$ and specificity $\nu$ (Assumption~\ref{asm:simple-sens-spec}), independence of testing from others' states (ITOS, Assumption~\ref{asm:itos}), and positivity (Assumption~\ref{asm:positivity}).
  Let $\omega_i(t) = 1 / \P[D_i(t) = 1 \giv W_i(t) = 1, C_{i,l_i(t)} = c]$.
  \begin{equation}
    \label{eq:estimator-expectation}
    \begin{aligned}
      \E\left[
        \frac{1}{\eta + \nu - 1} \sum_i \omega_i(t) D_i(t) \{1 - Y_i(t)\} - \frac{1-\eta}{\eta + \nu - 1} \sum_i \omega_i(t) D_i(t)
        \mgiv \vec{W}(t), \vec{C}_{\vec{L}(t)} = \vec{c}
      \right]
      = W_+(t).
    \end{aligned}
  \end{equation}
\end{thm}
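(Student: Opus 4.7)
The plan is to prove the per-individual identity $\E[\omega_i(t) D_i(t)(\eta - Y_i(t)) \giv \vec{W}(t), \vec{C}_{\vec{L}(t)} = \vec{c}] = (\eta + \nu - 1) W_i(t)$ for each $i$ and then conclude by linearity of expectation. To set this up, I would first combine the two sums in the statement into $\sum_i \omega_i(t) D_i(t)[(1 - Y_i(t)) - (1-\eta)] = \sum_i \omega_i(t) D_i(t) [\eta - Y_i(t)]$, so that the whole estimator becomes $(\eta + \nu - 1)^{-1} \sum_i \omega_i(t) D_i(t)[\eta - Y_i(t)]$.

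Next I would condition on the latent state of individual $i$, which falls into one of three mutually exclusive cases. If $W_i(t) = 1$, then $Y_i(t) = D_i(t)(1 - G_i(t))$ by Assumption~\ref{asm:simple-sens-spec}, so using $D_i(t)^2 = D_i(t)$ gives $D_i(t)(\eta - Y_i(t)) = D_i(t)(\eta - 1 + G_i(t))$; taking the conditional expectation, using independence of $G_i(t)$ and ITOS (Assumption~\ref{asm:itos}), yields $(\eta + \nu - 1)\, \P[D_i(t) = 1 \giv W_i(t) = 1, C_{i,L_i(t)}]$, which upon multiplication by $\omega_i(t)$ (finite by positivity, Assumption~\ref{asm:positivity}) equals $(\eta + \nu - 1) \cdot 1 = (\eta + \nu - 1) W_i(t)$. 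If $I_i(t) = 1$ (so $W_i(t) = 0$, $R_i(t) = 0$), then $Y_i(t) = D_i(t) F_i(t)$, and using $\E[F_i(t)] = \eta$ and independence of $F_i(t)$ gives $\E[D_i(t)(\eta - F_i(t)) \giv \cdot] = \eta\, \P[D_i(t) = 1 \giv \cdot] - \eta\, \P[D_i(t) = 1 \giv \cdot] = 0 = (\eta + \nu - 1) W_i(t)$. If $R_i(t) = 1$, then only non-removed individuals may be tested (step~\ref{it:tested-subset}), so $D_i(t) = 0$ almost surely and the contribution is trivially $0$.

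For individuals with $W_i(t) = 0$, conditioning on $(\vec{W}(t), \vec{C}_{\vec{L}(t)})$ alone does not determine whether $I_i(t)$ or $R_i(t)$ equals $1$, so a short application of the tower property is needed: since both sub-cases yield a conditional expectation of zero, the combined conditional expectation is zero regardless of how the conditional mass is split between them. Summing over $i$ and dividing by $\eta + \nu - 1$ then delivers $W_+(t)$, establishing \eqref{eq:estimator-expectation}.

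The main conceptual obstacle, and the feature that explains the specific form of the estimator, is the infectious case: $\omega_i(t)$ is defined via $\P[D_i(t) = 1 \giv W_i(t) = 1, C_{i,L_i(t)} = c]$, which in general differs from the actual testing probability for an infectious individual, and neither CITE nor ITOS forces these to coincide. The subtraction of $(1-\eta) \sum_i \omega_i(t) D_i(t)$ is chosen precisely so that the summand has conditional mean zero on $\set{I}(t)$ regardless of that unknown testing probability, allowing the identity to avoid stratifying weights by true (unobserved) disease state and to require positivity only for Well individuals.
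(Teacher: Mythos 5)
Your proof is correct and takes essentially the same route as the paper's: both reduce the estimator to $(\eta+\nu-1)^{-1}\sum_i \omega_i(t)\,\E[D_i(t)\{\eta - Y_i(t)\} \giv \vec{W}(t), \vec{C}_{\vec{L}(t)}=\vec{c}]$, evaluate the inner expectation via the Bernoulli structure of $F_i(t)$ and $G_i(t)$, invoke ITOS to reduce the conditioning to $(W_i(t), C_{i,L_i(t)})$ for Well individuals, and use positivity to cancel the weight. Your explicit three-case split (Well / Infectious / Removed, with the tower-property remark) is just an unpacking of the paper's compact identity $\E[1-Y_i(t)\giv D_i(t)=1,\,\text{state}] = 1-\eta+W_i(t)(\eta+\nu-1)$, and your closing observation about why the $(1-\eta)$ correction makes the infectious-stratum testing probability irrelevant matches the design rationale implicit in the paper's argument.
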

Full proof of Theorem~\ref{thm:unbiased-estimator} is given in the appendix.
As a brief sketch, simple random sensitivity and specificity allows $1 - Y_i(t)$ to be replaced by $1 - \eta + W_i(t) (\eta + \nu - 1)$, ITOS allows for each $i$ the condition $\left[\vec{W}(t), \vec{C}_{\vec{L}(t)} = c\right]$ to be replaced by $\left[W_i(t), C_{L_i(t)} = c\right]$, and positivity ensures that $\omega_i(t) \P[D_i(t) = 1 \giv W_i(t), C_{i,L_i(t)} = c_i] W_i(t) = W_i(t)$ for all $i$.

\begin{thm}[Identifiability of testing probabilities]
  \label{thm:test-probs}
  Assume no undetected recoveries (Assumption~\ref{asm:no-undetected-recoveries}), conditional independence of testing and exposure (CITE, Assumption~\ref{asm:cite}), simple random sensitivity and specificity (Assumption~\ref{asm:simple-sens-spec}), and identically distributed individuals (Assumption~\ref{asm:identically-distributed-individuals}).
  Let $\Ind(z > t) = 1$ if $z > t$ and $0$ otherwise, and $\mat{P}^{(c)} = \left(p_{sz}^{(c)}\right)$ be a $(t+2)\times(t+2)$ matrix with
  \begin{equation}
    \label{eq:test-probs}
    p_{s+1, z+1}^{(c)} = \left\{
      \begin{array}{ll}
        \P\left[\min\left\{Z_{K(s+1)+1}, t+1\right\} = z \mgiv C_{L(s+1)} = c \right], & s = c, \\
        \P\left[\min\left\{Z_{K(s+1)+1}, t+1\right\} = z \mgiv D(s) = 1, Y(s) = 0, C_{L(s+1)} = c \right], & c < s \leq t, \\
        \Ind(z > t), & \textrm{otherwise}, \\
      \end{array}
    \right.
  \end{equation}
  where the $+2$ in each dimension allows the first row and first column to represent time $0$, the last column to represent time after $t$, and the last row to keep the matrix square.
  Then
  \begin{equation}
    \label{eq:test-probs-ratio}
    \begin{aligned}
      \P&[D_i(t) = 1 \giv W_i(t) = 1, C_{i,l_i(t)} = c] \\
      &= \left[\sum_{k=1}^{t-c} \nu^{k-1} \left(\mat{P}^{(c)}\right)^k \right]_{c+1,t+1}
      \Bigg/
      \sum_{z=t+1}^{t+2} \left[\sum_{k=1}^{t-c} \nu^{k-1} \left\{ \left(\mat{P}^{(c)}\right)^{k} - \left(\mat{P}^{(c)}\right)^{k-1} \right\}\right]_{c+1,z}
    \end{aligned}
  \end{equation}
  with $\mat{P}^{(c)}$ identifiable by plugging in observed proportions to its definition.
\end{thm}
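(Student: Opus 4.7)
The plan is to express the quantity of interest as a ratio $\P[D_i(t) = 1 \giv W_i(t) = 1, C_{i,L_i(t)} = c] = \P[D_i(t) = 1, W_i(t) = 1 \giv C_{i,L_i(t)} = c] / \P[W_i(t) = 1 \giv C_{i,L_i(t)} = c]$, enumerate both numerator and denominator over the number and positions of tests administered in $[c+1, t]$, and then recognize the resulting sums as the matrix-power expression in \eqref{eq:test-probs-ratio}.

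First I would observe that, by Assumption~\ref{asm:no-undetected-recoveries} together with the time-to-event setup, the event $\{W_i(t) = 1, C_{i,L_i(t)} = c\}$ is equivalent to the conjunction of (a) no infectious exposure occurring in $[c, t-1]$ and (b) every test performed in $[c+1, t-1]$ returning negative. Under CITE (Assumption~\ref{asm:cite}), the exposure process is independent of the testing schedule given past test and clearance times; consequently, the probability of (a) factors out identically from both numerator and denominator and cancels, leaving only probabilities over the testing process conditional on an individual who is known to have no exposure on $[c, t-1]$.

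For the numerator I would set up a Markov chain on the states $\{0, 1, \ldots, t+1\}$ representing the capped next test time $\min(Z_k, t+1)$: starting at state $c$, each step moves to the next test time, with any test past $t$ absorbed into state $t+1$. The transition matrix is precisely $\mat{P}^{(c)}$, the rows for $c < s \leq t$ being conditioned on the previous test having returned negative. Under Assumption~\ref{asm:simple-sens-spec}, each test of a Well individual is independently negative with probability $\nu$; hence the joint probability of a trajectory of $k$ tests in $(c, t]$ whose last falls at $t$ and whose first $k-1$ all return negative equals $\nu^{k-1}$ times the product of the corresponding one-step entries of $\mat{P}^{(c)}$. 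Summing over trajectories is matrix multiplication, yielding $\nu^{k-1}[(\mat{P}^{(c)})^k]_{c+1,t+1}$; summing further over $k = 1, \ldots, t-c$ (no more than $t-c$ tests fit in $(c, t]$) gives exactly the numerator of \eqref{eq:test-probs-ratio}. For the denominator I would observe that $\P[W_i(t) = 1 \giv C_{i,L_i(t)} = c, \text{no exposure}] = \E[\nu^{N_{t-1}}]$, where $N_{t-1}$ is the random number of tests in $[c+1, t-1]$. With $H_k := [(\mat{P}^{(c)})^k]_{c+1,t+1} + [(\mat{P}^{(c)})^k]_{c+1,t+2}$, the probability $\P[N_{t-1} = k]$ equals $H_{k+1} - H_k$. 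Summation by parts applied to $\sum_k \nu^k (H_{k+1} - H_k)$, together with the identity $H_k = 1$ for $k \geq t-c$ (the chain must reach state $t$ or the absorbing state by step $t-c$ since test times strictly increase from $c$), collapses the series into precisely the denominator $\sum_{z=t+1}^{t+2} [\sum_{k=1}^{t-c} \nu^{k-1}\{(\mat{P}^{(c)})^k - (\mat{P}^{(c)})^{k-1}\}]_{c+1, z}$ of \eqref{eq:test-probs-ratio}.

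The main obstacle is the denominator bookkeeping: combining the summation by parts with the absorbing-state identity $H_k = 1$ for $k \geq t-c$ to land exactly on the matrix-difference form, keeping careful track that the $z = t$ (the chain lands at $t$) and $z = t+1$ (the chain is past $t$) contributions enter with the correct signs. A secondary subtlety is that the matrix-power representation implicitly treats the sequence of test times as Markovian in $(Z_k, Y(Z_k), C)$; this is built into the form of $\mat{P}^{(c)}$ and is realistic of typical scheduling policies such as max-gap, once-per-period, or fixed-interval rotation. Nonparametric identifiability of $\mat{P}^{(c)}$ itself follows from Assumption~\ref{asm:identically-distributed-individuals}, which allows the conditional probabilities defining each $p^{(c)}_{s+1, z+1}$ to be estimated by pooling observed proportions across individuals.
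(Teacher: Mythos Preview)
Your proposal is correct and follows essentially the same route as the paper: both rewrite the target as a ratio conditioned on no exposure (via no undetected recoveries and CITE), then identify the numerator and denominator with sums of $\nu$-weighted powers of the transition matrix $\mat{P}^{(c)}$ through an enumeration over test-time paths. The paper obtains the denominator by summing the $D_i(t)=1$ and $D_i(t)=0$ recursive decompositions rather than via your $\E[\nu^{N_{t-1}}]$ device, but after your reindexing $j=k+1$ these coincide, and your observation that the matrix-power representation implicitly treats test times as Markov in $(Z_k, Y(Z_k), C)$ is a point the paper leaves tacit.
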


Full proof of Theorem~\ref{thm:unbiased-estimator} is given in the appendix.
Key points are that no undetected recoveries and CITE imply that $\P[D_i(t) = 1 \giv W_i(t) = 1, C_{i,l_i(t)} = c]$ under the real data generating mechanism is equal to that under one in which the infection hazard is zero, and identically distributed individuals allow all individuals to share the same $\mat{P}^{(c)}$, which is a stochastic upper-triangular matrix interpretable as describing the transition probabilities among $Z_k$ as $k$ increases under zero hazard and perfect specificity.

Note that under the assumptions of Theorem~\ref{thm:test-probs} all elements of $\mat{P}^{(c)}$ admit unbiased estimators whenever at least one individual satisfies the corresponding condition, e.g.,
\begin{equation}
  \label{eq:testing-matrix-estimators}
  \begin{aligned}
    \E&\left[\frac{\sum_i \Ind\left(\min\left\{ Z_{i,K_i(c+1)+1}, t+1 \right\} = z\right) \Ind\left(C_{i,L_i(c+1)} = c\right)}{\sum_i \Ind\left(C_{i,L_i(c+1)} = c\right)} \mgiv \sum_i \Ind\left(C_{i,L_i(c+1)} = c\right) > 0\right] \\
    &= \P\left[\min\left\{Z_{K(c+1)+1}, t+1\right\} = z \mgiv C_{L(c+1)} = c \right].
  \end{aligned}
\end{equation}
When a condition is not satisfied by at least one individual, the estimator may be replaced by $\Ind(z > t)$.
Thus if the elements of $\mat{P}^{(c)}$ are known (e.g., controlled entirely by a central scheduler), prevalence may be estimated without bias, or arbitrarily small bias via Monte Carlo estimation of \eqref{eq:testing-matrix-estimators} after forward simulation of the testing process with zero exposure hazard.
If the testing probabilities are not known, they may be nonparametrically estimated from testing data via \eqref{eq:testing-matrix-estimators}.

Bias in the prevalence estimate with unknown weights arises solely from Jensen's inequality applied to the inversion of testing probabilities for weighting.
The bias therefore approaches zero asymptotically (population size going to infinity while tested proportion is held constant) due to the law of large numbers.
In practice, the bias due to noise in estimating testing probabilities appears largest when there exist non-empty strata in which the expected number of tested individuals is low, especially when there is a moderate-to-large probability of no individuals within a stratum being tested.
When no individuals are tested within a stratum, the within-stratum estimator takes the same value as if all stratum members had tested positive, yielding a high prevalence estimate that cannot be completely counterweighted by instances in which some tests are performed.
Because very small strata are most likely when incidence is low (as fewer individuals are detected and therefore cleared at the same time) and this issue has the largest effect in the same cases (because the estimator behaves as if all individuals were infectious), we propose in such situations instead estimating $W_+^{(c)}$ by the total number of non-removed individuals with $C_{i,L_i(t)} = c$, i.e., assuming all non-removed individuals in the stratum are well.
It is likely possible to evaluate the reasonableness of such a strategy in practice because prevalence estimates will be available from other strata and timepoints.

Following the argument of \cite{horvitz1952generalization}, we have under known testing probabilities and independent testing between individuals the variance expression
\begin{equation}
  \label{eq:variance}
  \begin{aligned}
    \Var&\left[
      \frac{1}{\eta + \nu - 1} \sum_i \omega_i(t) D_i(t) \{1 - Y_i(t)\} - \frac{1-\eta}{\eta + \nu - 1} \sum_i \omega_i(t) D_i(t)
      \mgiv \vec{W}(t), \vec{C}_{\vec{L}(t)} = \vec{c}
    \right] \\
    &= \left(\frac{1}{\eta + \nu - 1}\right)^2 \Var\left[
      \sum_i \omega_i(t) D_i(t) \{\eta - Y_i(t)\}
      \mgiv \vec{W}(t), \vec{C}_{\vec{L}(t)} = \vec{c}
    \right], \\
    &= \left(\frac{1}{\eta + \nu - 1}\right)^2
      \sum_{i=1}^{N} \{\eta - Y_i(t)\}^2 \frac{1 - \P[D_i(t) = 1 \giv W_i(t) = 1, C_{i,L_i(t)} = c_i]}{\left(\P[D_i(t) = 1 \giv W_i(t) = 1, C_{i,L_i(t)} = c_i]\right)^2}, \\
  \end{aligned}
\end{equation}
from which an unbiased estimator of the variance (still under known testing probabilities) may be obtained by summing over the tested individuals instead of all individuals.
\cite{horvitz1952generalization} also provides an extension to scenarios in which tests are dependent between individuals.
When estimating testing probabilities, we recommend using bias-corrected and accelerated bootstrap intervals, as illustrated in Section~\ref{sec:simulation}.

\section{Simulation study}
\label{sec:simulation}

\subsection{General setup}

We performed a simulation study to evaluate the properties of three estimators under four scenarios satisfying the assumptions required by our Horvitz-Thompson-type (HT) estimators, and five scenarios violating assumptions.
The general parameters of the assumption-satisfying scenarios were as follows.
A population of $1000$ individuals with identically distributed processes was simulated for 21 days.
Individuals were grouped into $250$ exchangeable clusters of $4$ exchangeable individuals per cluster.
The hazard of initial exposure from outside of the cluster while in the non-removed population was $h(\tau) = \frac{1}{30}\left(\frac{\tau (21 - \tau)}{(21 / 2)^2} (\frac{1}{10} - \frac{1}{50}) + \frac{1}{50}\right)$, where $\tau$ is the time since day $0$ or last clearance time.
The hazard of initial exposure from within the cluster while in the non-removed population was $1/5$ times the number of infectious individuals within the cluster, independently from exposure from outside the cluster.
The hazard of subsequent exposures was $\frac{1}{2}h(\tau)$.
Each simulation was set to begin with 2\% prevalence and peaked at approximately 5\% prevalence.
Test sensitivity and specificity were set to 83.2\% and 99.2\%, respectively, corresponding to estimates from the meta-analysis of saliva-based pCR tests for SARS-CoV-2 by \cite{butler2021comparison}.
At 5\% prevalence, these values yield 84.6\% positive predictive value and 99.1\% negative predictive value.
Individuals spent 5 days in the Removed compartment before returning to the Well compartment.

On each day within each simulation, we evaluate the test-positive rate and HT estimator with estimated testing probabilities (HT-E) as estimates of prevalence.
We produce confidence intervals at the 95\% confidence level via the exact method \citep{clopper1934use} with no finite population correction (for conservativeness) for the TPR and the bias-corrected and accelerated bootstrap approach ($BC_a$, \cite{efron1987better}) for the HT-E estimator, with 399 bootstrap iterations and acceleration factor estimated via jackknife with blocks of size 10.
When testing probabilities of individuals never exposed do not depend on exposure dynamics (e.g., as they do in the presence of contact tracing), we also give the Horvitz-Thompson estimator with known testing probabilities (HT-K), with Wald confidence intervals produced on $W_+(t)$ according to the variance formula \eqref{eq:variance} and transformed to the prevalence scale.
We simulated 1000 datasets for the TPR and HT-E estimators, and 10,000 for the HT-K estimator to account for its larger variance.

Estimates from the HT estimators are not automatically restricted to $[0, 1]$, and are sometimes below zero in the simulations above.
In practice, we recommend restricting estimates to $[0, 1]$ post hoc, as the restricted estimates are never farther than the unrestricted estimates from the truth, and are sometimes closer.
In the simulations described above, the post hoc restriction caused the HT-K estimator to be biased upward but reduced the RMSE by approximately 20\%, and did not substantially affect the HT-E estimator.

\subsection{Scenarios satisfying assumptions}
\label{sec:simulation-meet}

The scenarios satisfying the assumptions of the HT estimators are based on the first three example testing regimens in Section~\ref{sec:examples}.
In the simple random testing regimen, non-removed individuals are tested independently each day with probability $1/6$.
In the max-gap regimen, the time of first test is uniformly distributed among the first 10 days, and for subsequent times $t$ at which the most recent test or clearance time is $z$, non-removed individuals are tested with probability $(t-z)^2/10^2$.
In the once-per-period regimen, each non-removed individual is tested at a uniformly-distributed time within each 7-day calendar interval, or within the remainder of a 7-day interval in which they return to the non-removed population.
Finally, the min-max testing regimen operates similarly to the max-gap regimen but tests are not allowed within 5 days of the most recent test.
The simple random testing probability and maximum gap parameters are chosen to yield similar peak prevalences as the once-per-period regimen.

\begin{figure}
  \centering
  \includegraphics[scale = 1.0]{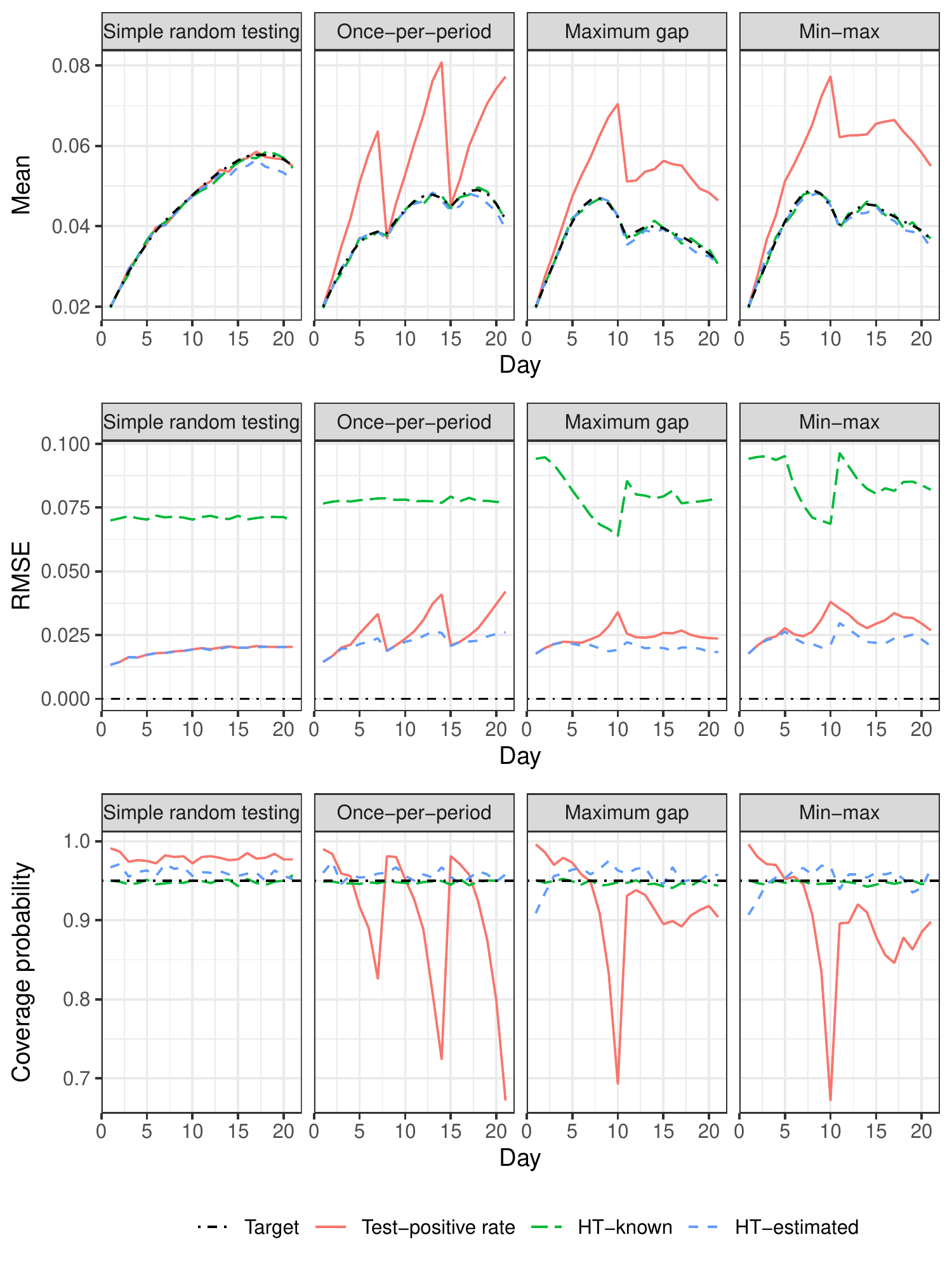}
  \caption{
    Simulations satisfying assumptions.
    Target for mean row is the mean of true prevalences across datasets.
    The RMSEs of the test-positive rate and HT-known estimator are identical under simple random testing.
    The variance of the HT-known estimator is higher than that of the HT-estimated due to occasionally very large weights.
  }
  \label{fig:sim-satisfy}
\end{figure}

Figure~\ref{fig:sim-satisfy} displays the results of the assumption-satisfying scenarios.
For the simple random testing scenario, all three estimators were approximately unbiased, and confidence interval coverage was near the nominal level, with the Clopper-Pearson intervals for the TPR being slightly conservative, as expected.
The RMSEs of the TPR and HT-E estimators are identical, and that of the HT-K substantially higher.
The substantially lower RMSE of the HT-estimated compared to HT-known estimator is likely due to a favorable bias-variance tradeoff from weight smoothing, an estimation-based relative of trimming large survey weights \citep{haziza2017construction}.
For all other assumption-satisfying scenarios, the HT estimators were unbiased and their confidence interval coverage was near the nominal level.
However, the TPR was biased upwards (except on the first day of each week in the once-per-period scenario), yielding higher RMSE and anticonservative confidence interval coverage except where the bias was small.
Coverage of TPR intervals is expected to decrease with increased sample size as the bias would remain unchanged.
In once-per-period scenario, the bias increased steadily within each period before returning to zero at the start of the next period.
In the max-gap and min-max scenarios, the first 10 days look similar to the once-per-period scenario because the first test of each individual was uniformly distributed over that period, then the bias of the test-positive rate stabilizes at roughly +30\% to +50\% as the testing hazard becomes quadratic.

\subsection{Scenarios violating assumptions}
\label{sec:simulation-violate}

The assumption-violating scenarios are based on the min-max testing regimen above because the regimen showcases both temporary ineligibility for testing and unequal testing probabilities by time of last test among those eligible.
The undetected recoveries scenario allowed an infectious individual that had not been removed within 6 days of exposure to return to the Well compartment, and exposures of those infectious at baseline were uniformly distributed among the previous 6 days.
In the time-varying sensitivity scenario (violating simple random sensitivity), a test of an infectious individual exposed at time $x$ and tested at time $t \in \{x+1, \ldots, x+10\}$ has probability $0.832 \cdot \max\left\{(t-x)(10-t+x)/(10/2)^2, 1/10\right\}$.
Pre-baseline exposures were uniformly distributed among the 6 days prior to baseline.
For estimation, we assume a sensitivity of 55.7\%, reflecting the average sensitivity during the first 10 days post-exposure (not necessarily the average sensitivity of all tests performed), though the time until detection could be longer due to imperfect sensitivity.
In the symptomatic testing scenario (violating CITE), whenever an individual is exposed, they have a $1/4$ probability of being symptomatic on their first day infectious, in which case they are tested immediately, regardless of normal eligibility rules.
In the contact tracing scenario (violating ITOS), when an individual tests positive all other non-removed individuals in their cluster are tested the next day regardless of normal eligibility rules.
In the clustered testing scenario (violating independence assumption for HT-K CIs), individuals are grouped into clusters of four, and the testing schedule is set with clusers instead of individuals as units, with individuals tested whenever the cluster is scheduled for testing and the individual is in the non-removed population.

\begin{figure}
  \centering
  \includegraphics[scale = 1.0]{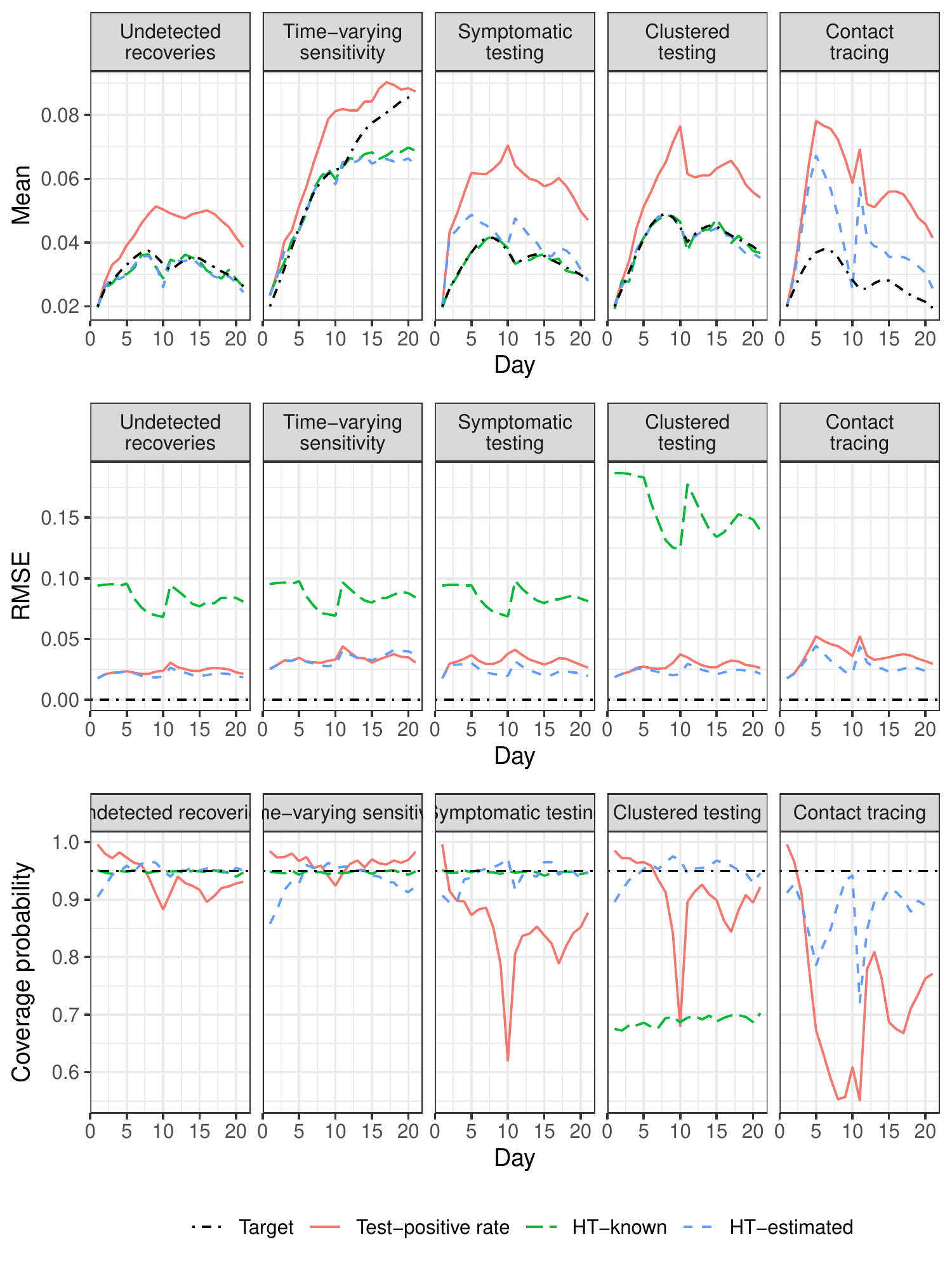}
  \caption{
    Simulations violating assumptions.
    Target for mean row is the mean of true prevalences across datasets.
    The variance of the HT-known estimator is higher than that of the HT-estimated due to occasionally very large weights.
  }
  \label{fig:sim-violate}
\end{figure}

Figure~\ref{fig:sim-violate} displays the results of the assumption-violating scenarios.
In all scenarios, the TPR was biased upward and had RMSE comparable to or higher than the HT-E estimator.
In the undetected recoveries and time-varying sensitivity scenarios, the confidence interval coverage was usually near the nominal level (though lower than in the simple random testing scenario), and in other scenarios coverage was dramatically anticonservative.
The HT-E estimator was unbiased for the clustered testing scenario, and very slightly negatively biased in the undetected recoveries scenario.
In the time-varying sensitivity scenario, the HT-E estimator was negatively biased near the end of the 21-day period, and in the asymptomatic testing and contact tracing scenarios it was substantially biased upward.
However, confidence interval coverage was near or above 90\% except in the contact tracing scenario, where it was much lower.
The HT-K estimator had similar bias to the HT-E estimator for the undetected recoveries and time-varying sensitivity scenarios, and also no bias for the clustered testing scenario, but unlike the HT-E estimator, was unbiased for the asymptomatic testing scenario (because all symptomatic individuals were infectious).
In all four of the above scenarios, the HT-K estimator had the highest RMSE but near-nominal confidence interval coverage in all but the clustered testing scenario, in which coverage was near 70\%.
The HT-K estimator is not available for the contact tracing scenario because the testing probabilities depend on the infectiousness of other cluster members.

\section{Real data analysis}
\label{sec:example}

As an illustration of our approach we analyze de-identified longitudinal testing data from 11,692 undergraduate students living on-campus at \if1\blind{[BLINDED: Large United States University]}\else{The Ohio State University}\fi\ during the fall 2020 semester.
Eligible students were required to undergo a saliva-based PCR test once per calendar workweek (Monday--Friday).
Students chose test dates subject to the once-per-week constraint.
Students could voluntarily test more than once per week and would also be tested if identified via contact tracing as potentially exposed.
There was little-to-no testing on Saturdays, Sundays, or holidays.
After taking a test, results were available within 1--2 days.
Students with positive test results were isolated for 10 days beginning the day after results were available and were exempt from further testing for 90 days following the date of their positive test.
Within the eligible population close contacts of students who received positive test results were tested and quarantined for 14 days beginning the day after results were available.
Students were sent home at the start of Thanksgiving break and the remaining class sessions were held remotely.
Figure~\ref{fig:test-counts} displays the daily test counts during the semester.

\begin{figure}
  \centering
  \includegraphics[scale = 0.8]{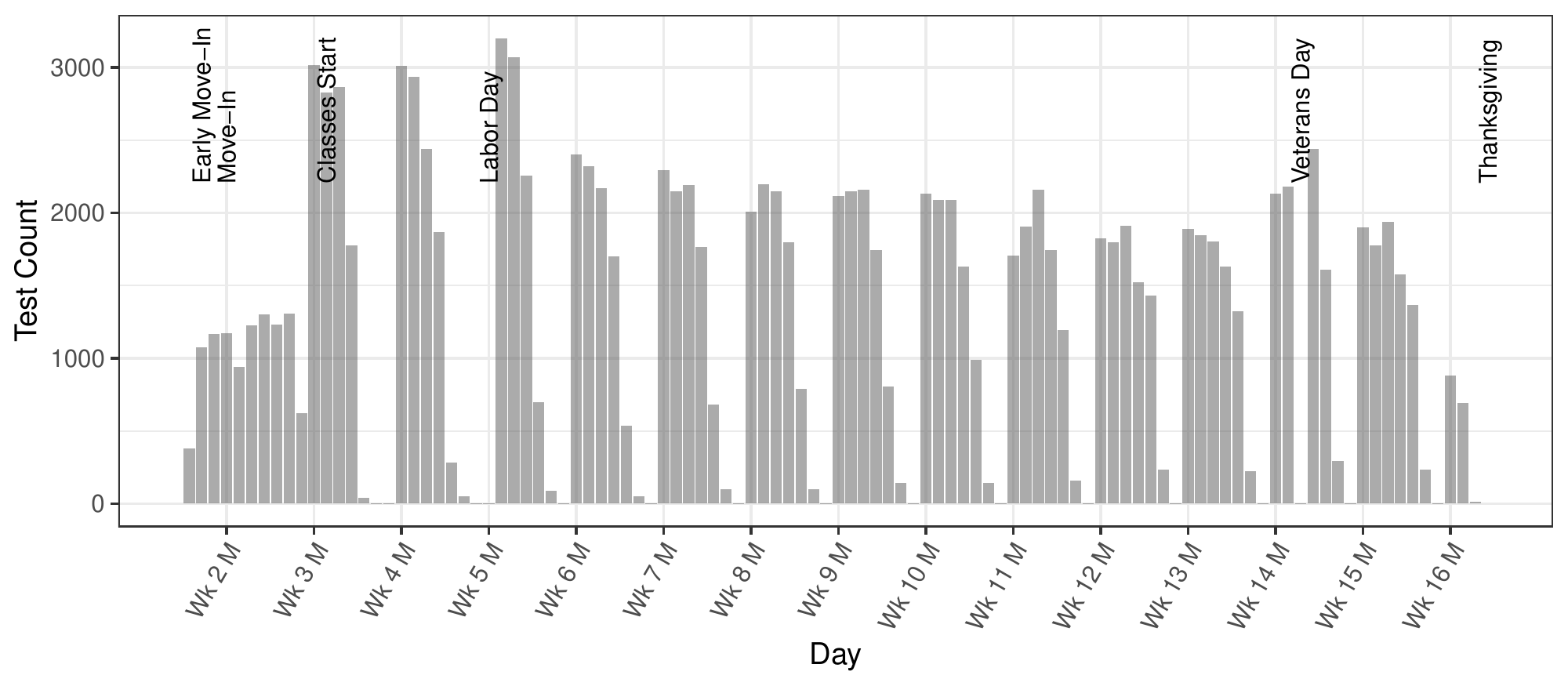}
  \caption{Daily test counts, including multiple tests per week by the same student. Vertical grid lines correspond to Mondays.}
  \label{fig:test-counts}
\end{figure}

We make the following simplifying assumptions to analyze the data.
First, we assume that all test results are returned on the second day following the test, and count those who eventually receive a positive result from a test to be part of the infectious (rather than removed) compartment through the date the result was received.
Second, when students return from isolation they are assumed to be non-infectious during the remainder of the 90 days of exemption from weekly testing.
Third, we attempted to exclude voluntary tests and tests due to contact tracing (which we expect to be non-representative) by retaining only the first test for each student during each calendar week.
Finally, although the topic of sensitivity and specificity of COVID-19 tests is complex, we assume 83.2\% test sensitivity as reported in the meta-analysis by \cite{butler2021comparison} and perfect specificity, as the vast majority of students during this semester were infection-na\"{i}ve.
Results from changes in both assumptions are also described.
Test sensitivity is assumed to be constant as a function of time since exposure.
We do not provide prevalence estimates on days for which there were less than 100 tests.
Saturdays, Sundays, and holidays comprised all but one such day.
The remaining excluded day was the Friday following the first day of class, for which the reason for the low test count is unknown to us.

To anonymize the data, we first construct a matrix with days as columns and students as rows.
Elements are the results of tests taken on that day (positive/negative) if applicable, or missing if no test was taken.
No student identifiers are present, and the order of the matrix rows is randomized.
Finally, iterating forward through the days, students in the non-removed population were stratified by last test date and last clearance date, and the vectors of subsequent test times and results were permuted within strata (retaining ordering within vectors).
This final shuffling mitigates risk of student identification via their longitudinal testing sequences but does not change the values of any of the estimators considered.

\begin{figure}
  \centering
  \includegraphics[scale = 0.8]{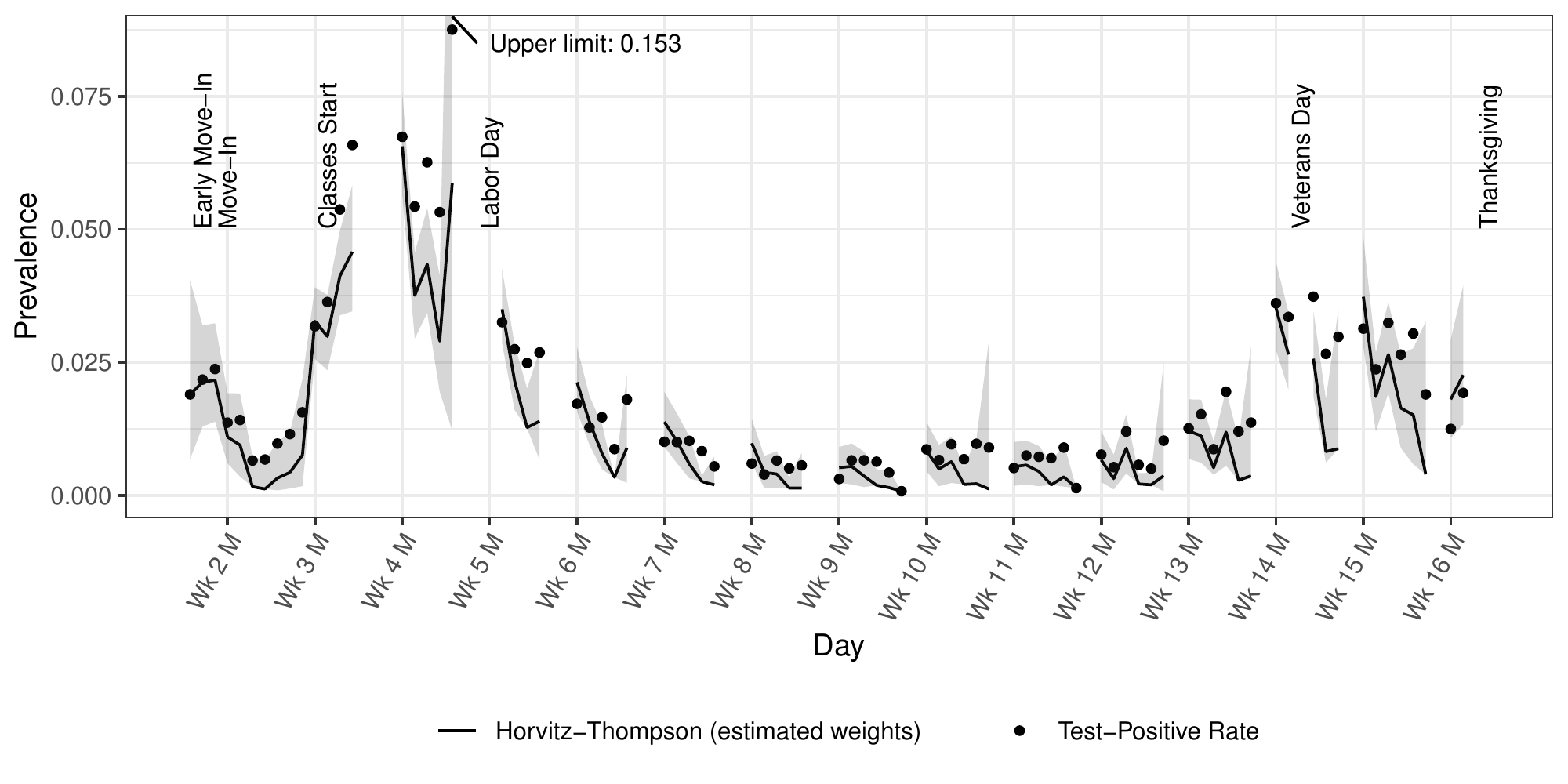}
  \caption{Daily prevalence estimates, adjusted for test sensitivity and reporting delays. Shaded ribbon is $BC_a$ 95\% confidence band. No estimates for days with $<100$ tests taken. Vertical grid lines correspond to Mondays.}
  \label{fig:result-adj}
\end{figure}

Figure~\ref{fig:result-adj} displays daily test-positive rates and HT-E prevalence estimates, both adjusted for imperfect test sensitivity, delay of test results, and assumed non-infectiousness during the time exempt from testing following the end of isolation.
The HT-E $BC_a$ confidence bounds were produced using 999 bootstrap iterations and 79 blocks of 148 individuals for the jackknife-estimated acceleration factor.
Up to the start of classes, both estimators largely agree, though the TPR is slightly higher, especially late in Week 2, consistent with the results of the simulation study under once-per-period testing.
The drop in both estimates immediately following move-in suggests a baseline prevalence of roughly 2\% among students moving in that was reduced via testing, isolation, and quarantine.

Following the start of classes, the HT-E estimate generally decreases over the course of each week, consistent with a mechanism in which a relatively large number of individuals are infected during the weekends then detected and removed by weekly testing.
The HT-E estimates but not the TPRs reflect the observation by \cite{hay2021intuition} that ``isolating positive individuals reduces prevalence in the tested population.''
Meanwhile, within each week the TPRs increase relative to the HT-E estimates, though the TPRs do not increase over the course of every week in absolute terms.

Varying test sensitivity between 77.4\% and 91.4\%, reflecting the 95\% confidence intervals reported by \cite{butler2021comparison}, yielded similar patterns and conclusions but scaled estimates within 100--127\% and 76--100\%, respectively.
To evaluate the effect of imperfect specificity, we assumed a value of 99.2\%, also following \cite{butler2021comparison}, with sensitivity ranging from 83.2\% to 50\%.
Although the results were similar on the semester scale, estimates that were below approximately 1\% in Figure~\ref{fig:result-adj} dropped to zero (TPR) or between 0 and 0.2\% (HT-E).
We do not consider even this high estimate of imperfect specificity to be realistic: a third of crude test-positive rates are less than 1\%, implying that almost all of the positive test results those days were false positives.

\section{Discussion}
\label{sec:discussion}

We have identified and characterized an under-recognized bias in prevalence estimates based on test-positive rates of repeated screening tests when those testing positive are subsequently isolated, and have presented unbiased and approximately unbiased estimators of prevalence in such situations.
The bias in question arises under natural repeated testing regimens such as once-per-week testing, and is present even when tests have perfect sensitivity and specificity, and without confounding factors such as contact tracing or symptomatic testing.
This bias arises due to confounding between the probability of an individual being tested on a given day and the probability that they are infectious, caused in part by the constraints of the testing schedule.
Our estimator achieves unbiasedness by weighting test results by the inverse probability of testing under a hypothetical scenario with zero hazard of infection, which may be estimated directly from the data.

We have illustrated the bias of test-positive rates and unbiasedness (or approximate unbiasedness) of our estimators via simulation studies under complications of imperfect test sensitivity and specificity.
We have also proposed $BC_a$ bootstrap confidence intervals which are straightforward to implement and appear well-calibrated in the correctly-specified simulation study but do not account for real data complications such as clustering of test schedules.
Further development of confidence interval constructions would be important.

Analysis of once-per-week testing data from the fall 2020 semester at \if1\blind{[BLINDED: Large United States University]}\else{The Ohio State University}\fi\ illustrated the feasibility of handling complications such as non-compliance to the testing regimen and via crude adjustments contact tracing, reporting delays, and temporary exemption from testing post-isolation.
Although on multi-week timescales the prevalence curves given by the test-positive rates and our estimator broadly agreed, the TPR tended to be higher and we identified systematic within-week discrepancies illustrating the bias of the TPR and suggesting a different weekly timing of incidence (higher on weekends rather than uniformly throughout the week) that could have implications for the efficacy of on-campus safety measures (e.g., social transmission versus in-classroom transmission).

Our proposed estimator and analysis relied on assumptions in three classes.
First, verifiable conditions on the testing process design that simplify considerations but can be influenced by the surveillant, and violations of which could be handled by book-keeping modifications of our estimator (such as differentiating between scheduled tests and those induced by contact tracing or symptoms).
Second, no-confounding assumptions that preclude alterations of risk-relevant behavior in response to scheduled tests, or of testing schedules in response to perceived risks, are essential to our theoretical results.
However, violations of these do not necessarily cause our estimators to perform worse than the TPR.
For example, if individuals believing they may have been exposed tended to schedule tests earlier than they otherwise would have, we would expect to see bias patterns similar to those of the symptomatic testing or contact tracing simulation scenarios, even though it may be more difficult to account for via book-keeping.
Finally, it may be possible to relax simplifying assumptions on the technical deteails of disease and testing processes such as known time-invariant test sensitivity \citep{chang2021repeat} and a long infectious period relative to gaps between tests, though any relaxation would likely introduce significantly more complexity and is reserved for future work.

Our strategy has been to provide an estimation approach to achieving unbiasedness while remaining as close as possible mechanically to the test-positive rate analysis.
We focus exclusively on prevalence estimation without modeling transmission or incorporating external data sources to emphasize correction of the repeated testing bias.
Combining such approaches with ours could greatly improve the accuracy of estimates, potentially at the expense of ease of implementation or robustness.
A promising alternative approach to constructing estimators under similar assumptions is via a hazard-based framework in the style of time-to-event analyses \citep{khudabukhsh2020survival}.
Under such an approach, some of the independence assumptions may be reinterpretable as independent censoring.

\bigskip
\begin{center}
{\large\bf SUPPLEMENTARY MATERIAL}
\end{center}

\begin{description}

\item[Data and code:] The supplementary materials include R scripts and shuffled data for reproducing the simulation and analysis results.
  
\item[Appendix:] The appendix contains proofs of results described in the main text.

\end{description}

\bigskip
\begin{center}
{\large\bf ACKNOWLEDGMENTS}
\end{center}

The authors thank \if1\blind{[BLINDED]}\else{Eben Kenah}\fi, \if1\blind{[BLINDED]}\else{Mikkel Quam}\fi, and \if1\blind{[BLINDED]}\else{Rebecca Andridge}\fi\, for their advice regarding epidemiological considerations, details of the testing regimen of \if1\blind{[BLINDED: Large United States University]}\else{The Ohio State University}\fi, and survey estimation methods, respectively.

\bibliographystyle{apalike}
\bibliography{1-article.bib}

\clearpage

\renewcommand{\theequation}{A.\arabic{equation}}
\renewcommand{\thesection}{A.\arabic{section}}

\setcounter{equation}{0}
\setcounter{section}{0}
\setcounter{thm}{0}

\section{Unbiased estimator of prevalence}

\begin{thm}[Unbiased estimator of prevalence]
  \label{Athm:unbiased-estimator}
  Assume simple random sensitivity $\eta$ and specificity $\nu$ (Assumption~\ref{asm:simple-sens-spec}), independence of testing from others' states (ITOS, Assumption~\ref{asm:itos}), and positivity (Assumption~\ref{asm:positivity}).
  Let $\omega_i(t) = 1 / \P[D_i(t) = 1 \giv W_i(t) = 1, C_{i,l_i(t)} = c]$.
  Then
  \begin{equation}
    \label{Aeq:estimator-expectation}
    \begin{aligned}
      \E\left[
        \frac{1}{\eta + \nu - 1} \sum_i \omega_i(t) D_i(t) \{1 - Y_i(t)\} - \frac{1-\eta}{\eta + \nu - 1} \sum_i \omega_i(t) D_i(t)
        \mgiv \vec{W}(t), \vec{C}_{\vec{L}(t)} = \vec{c}
      \right]
      = W_+(t).
    \end{aligned}
  \end{equation}
\end{thm}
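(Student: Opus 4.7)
The plan is to reduce the claim to a per-individual calculation via linearity of expectation and then exploit each of the three listed assumptions in turn. Writing the estimator as
$
\sum_i \omega_i(t)\Bigl[\tfrac{1}{\eta+\nu-1} D_i(t)(1-Y_i(t)) - \tfrac{1-\eta}{\eta+\nu-1} D_i(t)\Bigr],
$
it suffices to show that each summand has conditional expectation equal to $W_i(t)$ given $(\vec{W}(t), \vec{C}_{\vec{L}(t)} = \vec{c})$.

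First I would invoke ITOS to collapse the conditioning from the full vectors $(\vec{W}(t), \vec{C}_{\vec{L}(t)})$ down to the individual-level pair $(W_i(t), C_{i,L_i(t)}=c_i)$, since $D_i(t)$ depends on the joint state only through its own $(W_i(t), C_{i,L_i(t)})$. Next, using simple random sensitivity and specificity (Assumption~\ref{asm:simple-sens-spec}) together with the independence of $F_i(t), G_i(t)$ from all other variables, I would compute
$
\E[1-Y_i(t) \mid D_i(t)=1, W_i(t)] = 1-\eta(1-W_i(t)) - (1-\nu)W_i(t) = (1-\eta) + W_i(t)(\eta+\nu-1).
$
Multiplying by $\P[D_i(t)=1 \mid W_i(t), c_i]$ and subtracting the second term, the $(1-\eta)/(\eta+\nu-1)$ pieces cancel exactly, leaving
$
\omega_i(t)\, \P[D_i(t)=1 \mid W_i(t), c_i]\, W_i(t).
$

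The final step is the key observation: the factor $W_i(t)$ means that on $\{W_i(t)=0\}$ the summand is zero regardless of whether $\P[D_i(t)=1 \mid W_i(t)=0, c_i]$ equals the quantity in the definition of $\omega_i(t)$, while on $\{W_i(t)=1\}$ the probability inside matches the one used to define $\omega_i(t)$ and cancels to give $1$. Positivity (Assumption~\ref{asm:positivity}) ensures $\omega_i(t)$ is finite whenever $W_i(t)=1$, so no $0/0$ issues arise. Summing over $i$ yields $W_+(t)$, as required.

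The only subtle point, and what I would call the main obstacle, is getting the bookkeeping right around $\omega_i(t)$: it is defined using the counterfactual $W_i(t)=1$ in the denominator, whereas the conditional expectation produces the probability at the actual value of $W_i(t)$. The resolution is that the leftover $W_i(t)$ multiplier after cancellation is precisely what makes this asymmetry harmless, so the argument goes through without needing any further assumption relating $\P[D_i(t)=1 \mid W_i(t)=0, c_i]$ to $\P[D_i(t)=1 \mid W_i(t)=1, c_i]$.
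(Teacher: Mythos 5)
Your proposal is correct and follows essentially the same route as the paper's proof: compute $\E[1-Y_i(t)\mid D_i(t)=1,W_i(t)]=(1-\eta)+W_i(t)(\eta+\nu-1)$ from simple random sensitivity/specificity, cancel the $(1-\eta)$ terms, collapse the conditioning via ITOS, and use the leftover $W_i(t)$ factor plus positivity to make the inverse-probability weights cancel. Your explicit remark that the $W_i(t)$ multiplier renders harmless the mismatch between the weight's $W_i(t)=1$ denominator and the realized value of $W_i(t)$ is exactly the point the paper handles implicitly in its final line.
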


\begin{proof}
  Let $\omega_i(t)$ be finite weights, not necessarily summing to $1$, associated with individual $i$ at time $t$.
  Under simple random sensitivity and specificity, plus ITOS,
  \begin{equation}
    \label{Aeq:estimator-expectation}
    \begin{aligned}
      \E&\left[
        \frac{1}{\eta + \nu - 1} \sum_i \omega_i(t) D_i(t) \{1 - Y_i(t)\} - \frac{1-\eta}{\eta + \nu - 1} \sum_i \omega_i(t) D_i(t)
        \mgiv \vec{W}(t), \vec{C}_{\vec{L}(t)} = \vec{c}
      \right] \\
      &= \frac{1}{\eta + \nu - 1} \sum_i \omega_i(t) \P[D_i(t) = 1 \giv \vec{W}(t), \vec{C}_{\vec{L}(t)} = \vec{c}] \{1 - \eta + W_i(t) (\eta + \nu - 1)\} \\
      &\quad\quad- \frac{1-\eta}{\eta + \nu - 1} \sum_i \omega_i(t) \P[D_i(t) = 1 \giv \vec{W}(t), \vec{C}_{\vec{L}(t)} = \vec{c}], \\
      &= \sum_i \omega_i(t) \P[D_i(t) = 1 \giv \vec{W}(t), \vec{C}_{\vec{L}(t)} = \vec{c}] W_i(t), \\
      &= \sum_i \omega_i(t) \P[D_i(t) = 1 \giv W_i(t), C_{i,L_i(t)} = c_i] W_i(t),
    \end{aligned}
  \end{equation}
  where $\eta \in (0, 1]$ and $\nu \in (0, 1]$ are the test sensitivity and specificity, respectively.
  Then, under positivity, choosing the finite weights
  \begin{equation}
    \label{Aeq:weights}
    \omega_i(t) = 1 / \P[D_i(t) = 1 \giv W_i(t) = 1, C_{i,L_i(t)} = c_i]
  \end{equation}
  yields $\omega_i(t) \P[D_i(t) = 1 \giv W_i(t), C_{i,L_i(t)} = c_i] W_i(t) = W_i(t)$ for all $i$, making \eqref{Aeq:estimator-expectation} equal to $W_+(t)$.
\end{proof}

\section{Identifiability of testing probabilities}

\begin{thm}[Identifiability of testing probabilities]
  \label{Athm:test-probs}
  Assume no undetected recoveries (Assumption~\ref{asm:no-undetected-recoveries}), conditional independence of testing and exposure (CITE, Assumption~\ref{asm:cite}), simple random sensitivity and specificity (Assumption~\ref{asm:simple-sens-spec}), and identically distributed individuals (Assumption~\ref{asm:identically-distributed-individuals}).
  Let $\Ind(z > t) = 1$ if $z > t$ and $0$ otherwise, and $\mat{P}^{(c)} = \left(p_{sz}^{(c)}\right)$ be a $(t+2)\times(t+2)$ matrix with
  \begin{equation}
    \label{Aeq:test-probs}
    p_{s+1, z+1}^{(c)} = \left\{
      \begin{array}{ll}
        \P\left[\min\left\{Z_{K(s+1)+1}, t+1\right\} = z \mgiv C_{L(s+1)} = c \right], & s = c, \\
        \P\left[\min\left\{Z_{K(s+1)+1}, t+1\right\} = z \mgiv D(s) = 1, Y(s) = 0, C_{L(s+1)} = c \right], & c < s \leq t, \\
        \Ind(z > t), & \textrm{otherwise}, \\
      \end{array}
    \right.
  \end{equation}
  where the $+2$ in each dimension allows the first row and first column to represent time $0$, the last column to represent time after $t$, and the last row to keep the matrix square.
  Then
  \begin{equation}
    \label{Aeq:test-probs-ratio}
    \begin{aligned}
      \P&[D_i(t) = 1 \giv W_i(t) = 1, C_{i,l_i(t)} = c] \\
      &= \left[\sum_{k=1}^{t-c} \nu^{k-1} \left(\mat{P}^{(c)}\right)^k \right]_{c+1,t+1}
      \Bigg/
      \sum_{z=t+1}^{t+2} \left[\sum_{k=1}^{t-c} \nu^{k-1} \left\{ \left(\mat{P}^{(c)}\right)^{k} - \left(\mat{P}^{(c)}\right)^{k-1} \right\}\right]_{c+1,z}
    \end{aligned}
  \end{equation}
  with $\mat{P}^{(c)}$ identifiable by plugging in observed proportions to its definition.
\end{thm}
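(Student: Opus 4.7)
The plan is to reduce the computation to a hypothetical zero-hazard world $W_0$ in which no individual is ever exposed, and then identify the numerator and denominator of \eqref{Aeq:test-probs-ratio} as $\P[D_i(t) = 1, W_i(t) = 1 \mid C_{i,L_i(t)} = c]$ and $\P[W_i(t) = 1 \mid C_{i,L_i(t)} = c]$ evaluated in $W_0$.

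First I would argue the reduction. Under Assumption~\ref{asm:no-undetected-recoveries}, the event $\{W_i(t) = 1, C_{i,L_i(t)} = c\}$ forces no exposure in $(c, t]$, since any exposure there would leave the individual in $\set{I}$ or $\set{R}$ at time $t$. CITE (Assumption~\ref{asm:cite}) would then be invoked inductively along the test sequence in $(c, t]$: at each step the next test time is conditionally independent of the next exposure time given $(Z_{i,K_i(\cdot)}, C_{i,L_i(\cdot)}, R_i(\cdot) = 0)$, so forcing the exposure time to be beyond $t$ does not alter the distribution of the next test time. Combined with Assumption~\ref{asm:simple-sens-spec}, which makes each test result depend only on current infection status and independent noise, this shows that the conditional distribution of the testing history on $(c, t]$ given $W_i(t) = 1$ and $C_{i,L_i(t)} = c$ coincides with the corresponding distribution in $W_0$, so $\P[D_i(t) = 1 \mid W_i(t) = 1, C_{i,L_i(t)} = c]$ may be computed in $W_0$.

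Next I would interpret $\mat{P}^{(c)}$ as the one-step transition matrix of a Markov chain $(S_k)_{k \geq 0}$ with $S_0 = c+1$, where state $z+1$ encodes that the most recent test/clearance time is $z$ and state $t+2$ is an absorbing ``past $t$'' state. Because $Z_{K(s+1)+1} \geq s+1$, the chain strictly increases in state until absorption, so $\pi_k(z+1) := [(\mat{P}^{(c)})^k]_{c+1, z+1}$ is the probability that the $k$-th test after clearance occurs at exactly time $z$ (for $c < z \leq t$), while $\pi_k(t+2)$ is the probability that fewer than $k$ tests fall in $(c, t]$. In $W_0$ each test of a Well individual returns a true negative independently with probability $\nu$, so enumerating the step $k$ on which the test at time $t$ occurs gives the numerator
\begin{equation*}
  \P[D_i(t) = 1, W_i(t) = 1 \mid C_{i,L_i(t)} = c] = \sum_{k=1}^{t-c} \nu^{k-1}\,\pi_k(t+1).
\end{equation*}
For the denominator, let $\rho_k := \pi_k(t+1) + \pi_k(t+2) = \P[S_k \geq t+1]$; by monotonicity of the chain, $\rho_k - \rho_{k-1}$ equals the probability of first reaching state $\geq t+1$ at step $k$, which in turn equals the probability that exactly $k-1$ tests fall in $(c, t-1]$. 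Since in $W_0$, $\P[W_i(t) = 1 \mid C_{i,L_i(t)} = c] = \E[\nu^{\#\{\text{tests in }(c, t-1]\}}]$, summing $\nu^{k-1}(\rho_k - \rho_{k-1})$ reproduces the denominator. Dividing yields \eqref{Aeq:test-probs-ratio}. Identifiability of $\mat{P}^{(c)}$ from data then follows from Assumption~\ref{asm:identically-distributed-individuals}: every entry is a conditional probability of observable events shared across individuals and can be replaced by the corresponding sample proportion, as in \eqref{eq:testing-matrix-estimators}.

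The main obstacle will be the reduction step. CITE as stated is an independence of single scalars, whereas the reduction requires agreement of full joint distributions of the testing/result history on $(c, t]$ between the real DGP and $W_0$. Making this rigorous requires an induction along the test sequence that repeatedly conditions on the realized history and invokes CITE to swap between exposure-constrained and unconstrained versions of the next test time, with Assumption~\ref{asm:simple-sens-spec} used to detach test outcomes from future test scheduling. Once this reduction is in place, the Markov-chain algebra underlying the numerator and denominator is routine.
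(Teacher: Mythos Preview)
Your proposal is correct and follows essentially the same route as the paper's proof. Both arguments (i) use no undetected recoveries to equate $\{W_i(t)=1,\,C_{i,L_i(t)}=c\}$ with ``no exposure in $(c,t]$ and not removed by a false positive,'' (ii) invoke CITE step-by-step along the test sequence so that the testing-time distribution on $(c,t]$ matches that under zero hazard, and (iii) read off the numerator and denominator of \eqref{Aeq:test-probs-ratio} as appropriately $\nu$-weighted sums of entries of powers of $\mat{P}^{(c)}$, with identifiability from Assumption~\ref{asm:identically-distributed-individuals}.

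The only substantive presentational difference is that the paper packages the inductive use of CITE as an explicit one-step recursion in the last-test time $s$ (conditioning on $\tilde X_{i,L_i(c+1)}\ge t$ in the real data-generating mechanism rather than passing to a separate world $W_0$), and then matches that recursion to the matrix-power expression. Your Markov-chain framing is the same computation viewed abstractly: you should be aware that interpreting $\pi_k(z+1)$ literally as ``the probability that the $k$th test occurs at time $z$'' tacitly assumes the next-test-time law depends on the history only through the most recent negative test and $c$, which is not among the stated assumptions. The paper's recursion sidesteps this by showing directly that the target probability and the matrix expression satisfy the same forward recurrence; your induction along the test sequence would need to establish the same thing, after which the remaining algebra (including your clean identification of the denominator as $\E[\nu^{J}]$ with $J$ the number of tests in $(c,t-1]$) is indeed routine.
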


\begin{proof}
  Under the assumption of no undetected recoveries,
  \begin{equation}
    \label{Aeq:identifiable-testing-prob}
    \begin{aligned}
      \P&[D_i(t) = 1 \giv W_i(t) = 1, C_{i,L_i(t)} = c] \\
      &= \P[D_i(t) = 1 \giv L_i(t) = L_i(c+1), W_i(t) = 1, C_{i,L_i(c_i+1)} = c], \\
      &= \P[D_i(t) = 1 \giv L_i(t) = L_i(c+1), R_i(t) = 0, \tilde{X}_{i, L_i(c_i+1)} \geq t, C_{i,L_i(c_i+1)} = c], \\
      &=\frac{
        \P[D_i(t) = 1, L_i(t) = L_i(c+1), R_i(t) = 0 \giv  \tilde{X}_{i, L_i(c+1)} \geq t, C_{i,L_i(c+1)} = c]
      }{
        \P[L_i(t) = L_i(c+1), R_i(t) = 0 \giv  \tilde{X}_{i, L_i(c+1)} \geq t, C_{i,L_i(c+1)} = c]
      },
    \end{aligned}
  \end{equation}
  because $C_{i,L_i(t)} = c$ if and only if $C_{i,L_i(c+1)} = c$ and $L_i(t) = L_i(c+1)$, and given that there have been no additional clearances after $c$, $W_i(t) = 1$ if and only if the individual remains unexposed ($\tilde{X}_{i,L_i(c+1)} \geq t$) and has not been erroneously removed ($R_i(t) = 0$).
  The denominator is $1$ under perfect specificity.

  We have the recursive decomposition for the numerator
  \begin{equation}
    \begin{aligned}
      \P&[D_i(t) = 1, L_i(t) = L_i(c+1), R_i(t) = 0 \giv  \tilde{X}_{L_i(c+1)} \geq t, C_{i,L(c+1)} = c] \\
      &= \sum_{c < s < t}\left(
        \begin{array}{l}
          \P\left[\begin{array}{l} Z_{i,K_i(s+1)+1} = t, \\ L_i(t) = L_i(c+1), R_i(t) = 0\end{array} \mgiv \begin{array}{l} D_i(s) = 1, L_i(s) = L_i(c+1), \\ \tilde{X}_{i,L_i(c+1)} \geq t, C_{i,L_i(c+1)} = c\end{array}\right] \\
          \times \P[D_i(s) = 1, L_i(s) = L_i(c+1) \giv \tilde{X}_{i,L_i(c+1)} \geq t, C_{i,L_i(c+1)} = c]
        \end{array}
      \right) \\
      &\quad\quad+ \P\left[Z_{i,K_i(c+1)+1} = t \mgiv C_{i,L_i(c+1)} = c\right], \\
      &= \sum_{c < s < t}\left(
        \begin{array}{l}
          \P\left[\begin{array}{l} Z_{i,K_i(s+1)+1} = t, \\ L_i(t) = L_i(c+1), R_i(t) = 0\end{array} \mgiv \begin{array}{l} D_i(s) = 1, L_i(s) = L_i(c+1), \\ \tilde{X}_{i,L_i(c+1)} \geq s, C_{i,L_i(c+1)} = c\end{array}\right] \\
          \times \P[D_i(s) = 1, L_i(s) = L_i(c+1) \giv \tilde{X}_{i,L_i(c+1)} \geq s, C_{i,L_i(c+1)} = c]
        \end{array}
      \right) \\
      &\quad\quad+ \P\left[Z_{i,K_i(c+1)+1} = t \mgiv C_{i,L_i(c+1)} = c\right], \\
      &= \sum_{c < s < t}\left(
        \begin{array}{l}
          \P\left[\begin{array}{l} Z_{i,K_i(s+1)+1} = t, \\ L_i(t) = L_i(c+1), R_i(t) = 0\end{array} \mgiv \begin{array}{l} D_i(s) = 1, L_i(s+1) = L_i(c+1), \\ Y_i(s) = 0, C_{i,L_i(c+1)} = c\end{array}\right] \\
          \times \nu \P[D_i(s) = 1, L_i(s) = L_i(c+1) \giv \tilde{X}_{i,L(c+1)} \geq s, C_{i,L_i(c+1)} = c]
        \end{array}
      \right) \\
      &\quad\quad+ \P\left[Z_{i,K_i(c+1)+1} = t \mgiv C_{i,L_i(c+1)} = c\right], \\
      &= \sum_{c < s < t}\left(
        \begin{array}{l}
          \P\left[Z_{i,K_i(s+1)+1} = t, R_i(t) = 0 \mgiv D_i(s) = 1, Y_i(s) = 0, C_{i,L_i(s+1)} = c \right] \\
          \times \nu \P[D_i(s) = 1, L_i(s) = L_i(c+1) \giv \tilde{X}_{i,L_i(c+1)} \geq s, C_{i,L_i(c+1)} = c]
        \end{array}
      \right) \\
      &\quad\quad+ \P\left[Z_{i,K_i(c+1)+1} = t \mgiv C_{i,L_i(c+1)} = c\right],
    \end{aligned}
  \end{equation}
  where the third equality is due to CITE and the fourth due to simple random sensitivity and specificity.
  Similarly,
  \begin{equation}
    \begin{aligned}
      \P&[D_i(t) = 0, L_i(t) = L_i(c+1), R_i(t) = 0 \giv  \tilde{X}_{i,L_i(c+1)} \geq t, C_{i,L_i(c+1)} = c] \\
      &= \sum_{c < s < t}\left(
        \begin{array}{l}
          \P\left[Z_{i,K_i(s+1)+1} > t, R_i(t) = 0 \mgiv D_i(s) = 1, Y_i(s) = 0, C_{i,L_i(s+1)} = c \right] \\
          \times \nu \P[D_i(s) = 1, L_i(s) = L_i(c+1) \giv \tilde{X}_{i,L_i(c+1)} \geq s, C_{i,L_i(c+1)} = c]
        \end{array}
      \right) \\
      &\quad\quad+ \P\left[Z_{i,K_i(c+1)+1} > t \mgiv C_{i,L_i(c+1)} = c\right],
    \end{aligned}
  \end{equation}
  so that
  \begin{equation}
    \begin{aligned}
      \P&[L_i(t) = L_i(c+1), R_i(t) = 0 \giv  \tilde{X}_{i,L_i(c+1)} \geq t, C_{i,L_i(c+1)} = c] \\
      &= \sum_{d=0}^{1} \P[D_i(t) = d, L_i(t) = L_i(c+1), R_i(t) = 0 \giv  \tilde{X}_{i,L_i(c+1)} \geq t, C_{i,L_i(c+1)} = c], \\
      &= \sum_{c < s < t}\left(
        \begin{array}{l}
          \P\left[Z_{i,K_i(s+1)+1} \geq t, R_i(t) = 0 \mgiv D_i(s) = 1, Y_i(s) = 0, C_{i,L_i(s+1)} = c \right] \\
          \times \nu \P[D_i(s) = 1, L_i(s) = L_i(c+1) \giv \tilde{X}_{i,L_i(c+1)} \geq s, C_{i,L_i(c+1)} = c]
        \end{array}
      \right) \\
      &\quad\quad+ \P\left[Z_{i,K_i(c+1)+1} \geq t \mgiv C_{i,L_i(c+1)} = c\right].
    \end{aligned}
  \end{equation}

  Let $\Ind(z > t) = 1$ if $z > t$ and $0$ otherwise, and $\mat{P}^{(i,c)} = \left(p_{sz}^{(i,c)}\right)$ be a $(t+2)\times(t+2)$ matrix with
  \begin{equation}
    p_{s+1, z+1}^{(i,c)} = \left\{
      \begin{array}{ll}
        \P\left[\min\left\{Z_{i,K_i(s+1)+1}, t+1\right\} = z \mgiv C_{i,L_i(s+1)} = c \right], & s = c, \\
        \P\left[\min\left\{Z_{i,K_i(s+1)+1}, t+1\right\} = z \mgiv D_i(s) = 1, Y_i(s) = 0, C_{i,L_i(s+1)} = c \right], & c < s \leq t, \\
        \Ind(z > t), & \textrm{otherwise}, \\
      \end{array}
    \right.
  \end{equation}
  where the $+2$ in each dimension allows the first row and first column to represent time $0$, the last column to represent time after $t$, and the last row to keep the matrix square.
  With $s = c$, $\P\left[\min\left\{Z_{i,K_i(s+1)+1}, t+1\right\} = z \mgiv C_{i,L_i(s+1)} = c \right] = \P\left[Z_{i,K_i(c+1)+1} = z \mgiv C_{i,L_i(c+1)} = c \right]$ if $z \leq t$ and $\P\left[\min\left\{Z_{i,K_i(s+1)+1}, t+1\right\} = z \mgiv C_{i,L_i(s+1)} = c \right] = \P\left[Z_{i,K_i(c+1)+1} > z \mgiv C_{i,L_i(c+1)} = c \right]$ if $z > t$, and similarly for $\P\left[\min\left\{Z_{i,K_i(s+1)+1}, t+1\right\} = z \mgiv D_i(s) = 1, Y_i(s) = 0, C_{i,L_i(s+1)} = c \right]$ when $c < s \leq t$.
  
  The matrix $\mat{P}^{(i,c)}$ is stochastic upper-triangular with zeros along the diagonal except at the bottom-right corner, and may be interpreted as describing the transition probabilities among $Z_k$ as $k$ increases when the exposure hazard is uniformly zero and the test has perfect specificity.
  The quantity $\left[\nu^{k-1} \left(\mat{P}^{(i,c)}\right)^k \right]_{c+1,t+1}$ describes the probability of an individual, with last clearance time $c$ and no exposures before $t$, having their $k$th test after clearance at time $t$ with no intervening false positives.
  Similarly, $\left[\nu^{k-1} \left(\mat{P}^{(i,c)}\right)^k \right]_{c+1,t+2}$ describes the probability of an individual, with last clearance time $c$ and no exposures before $t$, having their $k$th test after clearance after time $t$ (and zero or more prior tests at or after time $t$) with no intervening false positives.
  Thus,
  \begin{equation}
    \label{Aeq:matrix-expressions}
    \begin{aligned}
      &\left[\sum_{k=1}^{t-c} \nu^{k-1} \left(\mat{P}^{(i,c)}\right)^k \right]_{c+1,t+1} \\
      &\quad\quad= \P[D_i(t) = 1, L_i(t) = L_i(c+1), R_i(t) = 0 \giv  X_{i,L_i(c+1)} \geq t, C_{i,L_i(c+1)} = c], \\
      &\sum_{z=t+1}^{t+2} \left[\sum_{k=1}^{t-c} \nu^{k-1} \left\{ \left(\mat{P}^{(i,c)}\right)^{k} - \left(\mat{P}^{(c)}\right)^{k-1} \right\}\right]_{c+1,z} \\
      &\quad\quad= \P[L_i(t) = L_i(c+1), R_i(t) = 0 \giv  X_{i,L_i(c+1)} \geq t, C_{i,L_i(c+1)} = c].
    \end{aligned}
  \end{equation}
  A version of result \eqref{Aeq:test-probs} specific to individual $i$ then follows from substituting the expressions \eqref{Aeq:matrix-expressions} into the ratio \eqref{Aeq:test-probs-ratio}.

  Under identically distributed individuals, we may define $\mat{P}^{(c)}$ to be $\mat{P}^{(i,c)}$ with the subscript $i$ dropped everywhere in its definition.
  We may then estimate $\mat{P}^{(c)}$ due to the following equations:
  \begin{equation}
    \label{Aeq:testing-matrix-estimators}
    \begin{aligned}
      \E&\left[\frac{\sum_i \Ind\left(\min\left\{ Z_{i,K_i(c+1)+1}, t+1 \right\} = z\right) \Ind\left(C_{i,L_i(c+1)} = c\right)}{\sum_i \Ind\left(C_{i,L_i(c+1)} = c\right)} \mgiv \sum_i \Ind\left(C_{i,L_i(c+1)} = c\right) > 0\right] \\
      &= \P\left[\min\left\{Z_{K(c+1)+1}, t+1\right\} = z \mgiv C_{L(c+1)} = c \right], \\
      \E&\left[\frac{
          \sum_i \left\{
            \begin{array}{l}
              \Ind\left(\min\left\{ Z_{i,K_i(c+1)+1}, t+1 \right\} = z\right) \\
              \times D_i(s) \left\{1 - Y_i(s)\right\} \Ind\left(C_{i,L_i(c+1)} = c\right)
            \end{array}
          \right\}
        }{
          \sum_i D_i(s) \left\{1 - Y_i(s)\right\} \Ind\left(C_{i,L_i(c+1)} = c\right)} \mgiv \sum_i D_i(s) \left\{1 - Y_i(s)\right\}\Ind\left(C_{i,L_i(c+1)} = c\right) > 0\right] \\
      &= \P\left[\min\left\{Z_{K(c+1)+1}, t+1\right\} = z \mgiv D(s) = 1, Y(s) = 0, C_{L(c+1)} = c \right]. \\
    \end{aligned}
  \end{equation}
  When a condition above is not satisfied by at least one individual, we replace the estimator with $\Ind(z > t)$.
\end{proof}

\end{document}